\newtheorem{definition}{Definition}
\newtheorem{lemma}{Lemma}
\newtheorem{theorem}{Theorem}
\newtheorem{corollary}{Corollary}
\newtheorem{remark}{Remark}
\newif\ifcomment
\newif\ifcommentLater
\newcommand{\ie}{\emph{i.e.}, }
\definecolor{dartmouthgreen}{rgb}{0.05, 0.5, 0.06}
\def\smalloverbrace#1{\mathop{\vbox{\m@th\ialign{##\crcr\noalign{\kern3\p@}%
  \tiny\downbracefill\crcr\noalign{\kern3\p@\nointerlineskip}%
  $\hfil\displaystyle{#1}\hfil$\crcr}}}\limits}
\def\smallunderbrace#1{\mathop{\vtop{\m@th\ialign{##\crcr
   $\hfil\displaystyle{#1}\hfil$\crcr
   \noalign{\kern3\p@\nointerlineskip}%
   \tiny\upbracefill\crcr\noalign{\kern3\p@}}}}\limits}
\def\={\coloneqq}							
\def\tpar#1{\left( #1 \right)}    			
\def\qpar#1{\left[ #1 \right]}  			
\def\gpar#1{\left\{ #1 \right\}} 			
\def\scalar#1#2{\langle #1,#2 \rangle}		
\def\abs#1{\left|#1\right|}					
\def\floor#1{\left\lfloor #1 \right\rfloor} 
\def\cb#1{| #1 \rangle}						
\def\rb#1{\langle #1 |}						
\def\pdm#1{\cb{#1} \rb{#1}}					
\def\bbmatrix#1{\qpar{\;\begin{matrix}#1\end{matrix}\;}} 
\def\W#1{\mathbf{W} \!\tpar{#1}}	        	
\def\Wi#1#2{\mathbf{W}_{\!#1} \!\tpar{#2}}		
\def\BMm#1{\Bm_{\tpar{#1}}}					
\def\q#1{\mathfrak{q}_\mathsf{#1}}
\def\a{\alpha}
\def\As{\mathcal{A}}
\def\b{\beta}
\def\Bm{\mathbf{B}}
\def\Cs{\mathcal{C}}							
\def\Csp{\Cs^\perp}							
\def\cPhi{\cb{\Phi}}
\def\F{\mathbb{F}}
\def\Gm{\mathbf{G}}
\def\GC{\Gm_\Cs}							
\def\GCp{\Gm_{\Csp}}						
\def\H{\mathcal{H}}							
\def\Im{\mathbf{I}}
\def\Ls{\mathcal{L}}
\def\N{\mathbb{N}}
\def\Rs{\mathcal{R}}
\def\Xm{\mathbf{X}}
\def\Zm{\mathbf{Z}}
\newcommand{\A}{\mathcal{A}}
\newcommand{\cT}{\ensuremath{\mathcal{T}}}
\newcounter{alp}
\newcounter{ara}
\newcounter{rom}
\newenvironment{alphalist}{\begin{list}{(\alph{alp})\hfill}{\usecounter{alp}
     \topsep0.4ex \labelwidth.6cm \leftmargin.6cm \labelsep0cm
     \rightmargin0cm \parsep0ex \itemsep0ex}}{\end{list}}
\newenvironment{arabiclist}{\begin{list}{(\arabic{ara})\hfill}{\usecounter{ara}
     \topsep0.4ex \labelwidth.6cm \leftmargin.6cm \labelsep0cm
     \rightmargin0cm \parsep0ex \itemsep0ex}}{\end{list}}
\newcommand{\T}{\mbox{$\!^{\sf T}$}}
\title{Quantum Private Information Retrieval from Coded and Colluding Servers}
\author{
  \IEEEauthorblockN{Matteo Allaix \IEEEauthorrefmark{1}, Lukas Holzbaur \IEEEauthorrefmark{2}, Tefjol Pllaha \IEEEauthorrefmark{1}, Camilla Hollanti \IEEEauthorrefmark{1}}
	
	\IEEEauthorblockA{\small \IEEEauthorrefmark{1} Aalto University, Finland. E-mails:
    \{matteo.allaix, tefjol.pllaha, camilla.hollanti\}@aalto.fi
	}\\
	\IEEEauthorblockA{\small \IEEEauthorrefmark{2} Technical University of Munich, Germany. E-mail:  lukas.holzbaur@tum.de
    }
  \thanks{L.~Holzbaur was supported by TU Munich -- Institute for Advanced Study, funded by the German Excellence Initiative and EU 7th Framework Programme under Grant Agreement No. 291763 and the German Research Foundation (Deutsche Forschungsgemeinschaft, DFG) under Grant No. WA3907/1-1. T.~Pllaha was partially supported by Academy of Finland, under the Grant No. 319578. C.~Hollanti and M.~Allaix were supported by the Academy of Finland, under Grants No. 303819 and 318937. C.~Hollanti was also supported  by the TU Munich -- Institute for Advanced Study, funded by the German Excellence Initiative and the EU 7th Framework Programme under Grant Agreement No. 291763, via a Hans Fischer Fellowship.\newline Parts of this paper have been presented at the \emph{2020 IEEE International Symposium on Information Theory (ISIT)}~\cite{allaix2020quantum}.}
}
\begin{document}

\maketitle

\begin{abstract}
In the classical private information retrieval (PIR) setup, a user wants to retrieve a file from a database or a distributed storage system (DSS)  without revealing the file identity to the servers holding the data. In the quantum PIR (QPIR) setting, a user privately retrieves a classical file by receiving quantum information from the servers. The QPIR problem has been treated by Song \emph{et al.} in the case of replicated servers, both without collusion and with all but one servers colluding. In this paper, the QPIR setting is extended to  account for maximum distance separable (MDS) coded servers. The proposed protocol works for any $[n,k]$-MDS code and $t$-collusion with $t=n-k$. Similarly to the previous cases, the rates achieved are better than those known or conjectured in the classical counterparts. Further, it is demonstrated how the protocol can adapted to achieve significantly higher retrieval rates from DSSs encoded with a locally repairable code (LRC) with disjoint repair groups, each of which is an MDS code. 
\end{abstract}

\section{Introduction}

Private information retrieval (PIR), a problem initially introduced by Chor \emph{et al.} \cite{chor1995private}, enables an user to download a data item from a database without revealing the identity of the retrieved item to the database owner (user privacy). If additionally the user is supposed to obtain no information about any file other than the requested file (server privacy), the problem is referred to as \emph{symmetric} PIR. In recent years, PIR has gained renewed interest in the setting of distributed storage systems (DSSs), where the servers are storing possibly large files. To protect from data loss in the case of the failure of some number of servers, such systems commonly employ either replication (all servers store all files completely) or erasure-correcting codes (each server stores specific linear combinations of symbols of each file). Popular classes of the latter include maximum distance separable (MDS) codes \cite{macwilliams1977theory} and locally repairable codes (LRCs) \cite{huang2013pyramid,gopalan2012locality, kamath2014codes, tamo2014family}. A further extension of the PIR problem considers the case of \emph{collusion}, where the index of the file requested by the user is required to be private even if an (unknown) subset of servers cooperates. The capacity of PIR is known in a variety of settings \cite{sun2017capacity,sun2018capacity,banawan2018capacity,wang2017symmetric,Banawan2019byzantine}, but is still open for coded and colluding servers \cite{freij2017private,Sun2018conjecture}. Progress towards the general coded colluded PIR capacity was recently made in \cite{Holzbaur2019ITW,holzbaur2019capacity}. 

\begin{table*}[t]
    \centering
    \caption{Known capacity results with $n$ servers. For the classical PIR capacities, we report the asymptotic results with respect to the number of files. The acronym SPIR stands for symmetric PIR. The QPIR results were proved for $n=2$,$n=t+1$, and $n=k+t$ servers, respectively. The result in red is a conjectured result, but a protocol achieving that rate was proposed in \cite{freij2017private}. The result in green is proved in this paper.}
    \begin{tabular}{|l|cc|cc|cc|}
    \hline 
    \textsc{Capacities} & PIR & ref. & SPIR & ref. & QPIR & ref. \\
    \hline
    Replicated storage, & \multirow{2}{*}{$1-\frac{1}{n}$} & \multirow{2}{*}{\cite{sun2017replicated}} & \multirow{2}{*}{$1-\frac{1}{n}$}  & \multirow{2}{*}{\cite{sun2018capacity}} & \multirow{2}{*}{1} & \multirow{2}{*}{\cite{song2019capacitymultiple}} \\
    no collusion & & & & & & \\
    \hline
    Replicated storage, & \multirow{2}{*}{$1-\frac{t}{n}$} & \multirow{2}{*}{\cite{sun2017capacity}} & \multirow{2}{*}{$1-\frac{t}{n}$}  & \multirow{2}{*}{\cite{wang2017colluding}} & \multirow{2}{*}{$\geq \frac{2}{t+2}$} & \multirow{2}{*}{\cite{song2019capacitycollusion}} \\
    $t$-collusion & & & & & & \\
    \hline
    $[n,k]$-MDS coded & \multirow{2}{*}{\textcolor{red}{$1-\frac{k+t-1}{n}$}} & \multirow{2}{*}{\cite{holzbaur2019capacity}} & \multirow{2}{*}{$1-\frac{k+t-1}{n}$}  & \multirow{2}{*}{\cite{wang2017linear}} & \multirow{2}{*}{\textcolor{dartmouthgreen}{$\geq \frac{2}{k+t+1}$}} & \multirow{2}{*}{--} \\
    storage, $t$-collusion & & & & & & \\
    \hline
    \end{tabular}
    \label{tab:Capacities}
\end{table*}

The problem of PIR has been also been considered in the quantum communication setting \cite{kerenidis2004quantum, gall2011quantum,giovannetti2008quantum}, where the problem is referred to as \emph{quantum} PIR (QPIR). More recently, Song \emph{et al.} \cite{song2019capacitymultiple,song2019capacitycollusion,song2020} introduced a scheme for a replicated storage system with classical files, where the servers respond to user's (classical) queries by sending quantum systems. The user is then able to privately retrieve the file by measuring the quantum systems. The servers are assumed to share some maximally entangled states, while the user and the servers are not entangled. The non-colluding case was considered in \cite{song2019capacitymultiple}, and was shown to have capacity\footnote{The quantum PIR schemes in \cite{song2019capacitymultiple,song2019capacitycollusion,song2020} and in this work are symmetric, \emph{i.e.}, the user obtains no information about any file except the requested one. However, for the comparison of our rates to the classical setting we will primarily focus on the (higher) non-symmetric rates for the latter, as we consider this to be the more important setting.} equal to one. This is in stark contrast to the classical replicated (asymptotic) PIR  capacity of $1-\frac{1}{n}$ for $n$ servers. The case of QPIR for all but one servers colluding, \emph{i.e.}, $t=n-1$, was considered in \cite{song2019capacitycollusion}, again achieving higher capacity than the classical counterpart. In this case, the QPIR capacity is $\frac{2}{n}$, while classically (and asymptotically) it is $\frac{1}{n}$. In~\cite{song2020}, the authors extend their work \cite{song2019capacitymultiple,song2019capacitycollusion} by considering symmetric QPIR that can resist any $t$ servers colluding. They prove that the $t$-private QPIR capacity is $1$ for $1\leq t \leq n/2$ and $2(n-t)/n$ for $n/2 < t < n$ and they use the \emph{stabilizer formalism}~\cite{Gottesman97} to construct a capacity-achieving protocol. For the reader's convenience, we report some known results on the capacity in Table~\ref{tab:Capacities}.

\subsection{Contributions}

We consider an MDS coded storage system with (classical) files, where the servers respond to user's (classical) queries by sending quantum systems. The user is then able to privately retrieve the file by measuring the quantum systems. The servers are assumed to share some maximally entangled states, while the user and the servers share no entanglement. We generalize the QPIR protocol for replicated storage systems protecting against collusion of all but one servers \cite{song2019capacitycollusion} to the case of $[n,k]$-MDS coded servers and arbitrary $t$-collusion. That is, we extend from the case of $t=n-1$ collusion to $t=n-k$ collusion. Hence,  the protocol of \cite{song2019capacitycollusion} is the special case of $k=1$ in our protocol. This can be seen as trading off collusion protection for reduced storage overhead. The achieved rate $\sim \frac{2}{k+t}$ (cf. Theorem~\ref{thm:rate}) is higher than the conjectured asymptotic rate $\frac{1}{k+t}$ in the classical coded and colluding PIR~\cite{freij2017private}.

The setup is closely related to the classical PIR schemes of \cite{tajeddine2018private,freij2017private}, for more details see Remark~\ref{rem:quantumLimitation}. In the following we set the number of servers to be $n=k+t$. However, note that the scheme can also be  applied to storage systems with $n'>k+t$ servers by performing it on any subset of $k+t$ servers.

Further, we consider LRC coded storage and adapt our QPIR scheme accordingly. The new scheme can achieve a higher rate, which interestingly enough, does not depend on the length of the code (or number of servers). As usual, this comes with a trade-off on collusion resistance.

\section{Basics on PIR and Quantum Computation}

\subsection{Notation}
For any two vectors $u=(u_1,\ldots, u_n),v=(v_1,\ldots,v_n)$ of the same length $n$ we denote their inner product by $\scalar{u}{v}=u_1v_1+\cdots +u_n v_n$. We will denote by $\qpar{n}$ the set $\gpar{1,2,\ldots,n}$, $n\in\mathbb{N}$, and by $\F_q$ the finite field of $q$ elements. In this paper, we only consider characteristic two, \emph{i.e.}, $q$ is a power of two.
We denote a linear code of length $n$ and dimension $k$ by $[n,k]$. 

Let $\Gamma = \{\gamma_1,....,\gamma_{\mu}\}$ be a basis of $\F_{q^\mu}$ over $\F_q$. For $\alpha \in \F_{q^\mu}$ denote by $\varphi$ the bijective, $\F_q$-linear map
\begin{equation}
\begin{split}
\varphi : \F_{q^\mu} & \to \F_q^\mu \\
\alpha = \sum_{i=1}^{\mu} \alpha_i \gamma_i & \mapsto \tpar{\alpha_1,\ldots,\alpha_{\mu}}.
\end{split}
\label{eq:Field isomorphism}
\end{equation}
For a vector $v\in \F_{q^\mu}^n$ we write, by slight abuse of notation, $\varphi(v) = (\varphi(v_1), \varphi(v_2),...,\varphi(v_n))$ for the component-wise application of the mapping.

\subsection{Linear codes and Distributed Data Storage} \label{sec:DSS}

PIR is closely related to distributed data storage, especially when considering PIR from coded storage. Generally, the goal of a DSS is to increase the system reliability, by distributing the data such that a limited number of failed servers does not incur data loss. To achieve this, each file $x^i$ is split into $\beta k$ parts, which are viewed as elements of $\F_q$, arranged in an $\beta \times k$ matrix. Each row is then encoded with a linear code $\Cs$, a subspace of $\F_q^n$, where $n$ is the number of server in the DSS. We denote its dimension by $k$, refer to $n$ as the length of the code $\Cs$, and to an element of $\F_q$ as a symbol, e.g., "a symbol of file $x^i$" refers to one of the elements of $\F_q$ that make up file $x^i$. For short, we say $\Cs$ is an $[n,k]$ code. A matrix $\GC \in \F_q^{k \times n}$ is said to be a \emph{generator matrix} of an $[n,k]$ code $\Cs$ if its $k$ rows are a basis of the linear space $\Cs$. A message $m \in \F_q^k$ is encoded to a codeword $c\in \Cs$ by $c = m \cdot \GC$. This encoding gives $\beta$ codewords $y^i_{b} \in \Cs, b\in [\b]$ per file $x^i$. Server $s$ then stores the $s$-th position of each codeword. An illustration of such a DSS is given in \autoref{fig:DSS}.
\begin{figure}
\centering
\begin{tikzpicture}[thick,scale=0.9, every node/.style={transform shape}]
\path
(4.35,0) node{
	$\bbmatrix{
	x^1_{1,1}  & \cdots & x^1_{1,k}  \\
	\vdots     & \ddots & \vdots     \\
	x^1_{\b,1} & \cdots & x^1_{\b,k} \\
	\vdots     & \vdots & \vdots     \\
	x^m_{1,1}  & \cdots & x^m_{1,k}  \\
	\vdots     & \ddots & \vdots     \\
	x^m_{\b,1} & \cdots & x^m_{\b,k} \\
	} \quad \cdot \GC = \quad
	\bbmatrix{
	y^1_{1,1}  & \cdots & y^1_{1,n}  \\
	\vdots     & \ddots & \vdots     \\
	y^1_{\b,1} & \cdots & y^1_{\b,n} \\
	\vdots     & \vdots & \vdots     \\
	y^m_{1,1}  & \cdots & y^m_{1,n}  \\
	\vdots     & \ddots & \vdots     \\
	y^m_{\b,1} & \cdots & y^m_{\b,n} \\
	}$
}
(-.2,1.1) node[blue]{file 1}
(-.2,-1) node[blue]{file $m$}
(6,-2.3) node[orange]{\small $\text{\rmfamily\scshape server}_1$}
(7.8,-2.3) node[orange]{\small $\text{\rmfamily\scshape server}_n$}
;
\draw[thin,blue,rounded corners=4pt] (0.6,0.25) rectangle (3.1,2);
\draw[thin,blue,rounded corners=4pt] (.6,-.3) rectangle (3.1,-2);
\draw[thin,orange,rounded corners=4pt] (5.6,-2) rectangle (6.3,2);
\draw[thin,orange,rounded corners=4pt] (7.3,-2) rectangle (8.1,2);
\end{tikzpicture}
\caption{Illustration of a DSS storing $m$ files, each consisting of $\b k$ symbols. The matrix $\GC$ is a generator matrix of an $[n,k]$ code $\Cs$.}
\label{fig:DSS}
\end{figure}
As the symbols of each codeword are distributed over the $n$ servers, a server failure now corresponds to a symbol erasure in each codeword. It is well known that a code can correct any $d-1$ erasures, where $d$ denotes the minimum distance 
\begin{equation*}
    d(\Cs) = \min_{\substack{c_1,c_2 \in \Cs \\ c_1 \neq c_2}} d_{\mathsf{H}}(c_1,c_2)
\end{equation*}
and $d_{\mathsf{H}}(c_1,c_2)$ is the Hamming distance between $c_1$ and $c_2$. If the code we are referring to is clear from context, we write simply write $d$ instead of $d(\Cs)$. In the first part of this work, we focus on MDS codes, \emph{i.e.}, $[n,k]$ codes of minimum distance $d=n-k+1$ (for details see \cite[Chapter~11]{macwilliams1977theory}). In Section~\ref{sec:LRC} we consider LRCs, a class of codes that recently attracted a lot of attention \cite{huang2013pyramid,gopalan2012locality,kamath2014codes, tamo2014family}. A formal definition and some more details on these codes are given in the respective section.

For the PIR scheme introduced in the following, we also require the concept of a dual code $\Cs^{\perp}$ of $\Cs$, which is the dual space of $\Cs$ in $\F_q^n$. For MDS codes it is well known (cf. \cite[Chapter 11]{macwilliams1977theory}) that the dual of an MDS code is again an MDS code. Finally, we use the notion of an information set $\mathcal{I} \subset [n]$ of a code, which refers to any subset of positions of a code $\Cs$ for which the subspace obtained by restricting the code $\Cs$ to the positions indexed by $\mathcal{I}$ is of dimension $k$. Intuitively, an information set is a subset of positions of a codeword which allows for a unique recovery of the corresponding message/file.

\subsection{Quantum Computation}\label{sec:QC}
In this section we introduce the notation related to quantum computation to be used later on. For details we refer the reader to~\cite{NC00}.

A \emph{qubit} is a 2-dimensional Hilbert space $\mathcal{H}$ along with a computational basis, that is, a prespecified orthonormal basis $\mathcal{B} = \{|0\rangle,\,|1\rangle\}$. One typically takes $\mathcal{H} = \mathbb{C}^2$. A \emph{state} of $\mathcal{H}$ is a unit vector $|\psi\rangle \in \mathcal{H}$, while $\langle\psi|$ denotes the adjoint of $|\psi\rangle$, that is, $\langle\psi| = |\psi\rangle^\dagger$. Thus, $\langle\phi|\psi\rangle$ and $|\phi\rangle\langle\psi|$ define inner and outer products respectively.

We will work with multiple 2-qubit systems $\mathcal{H} = \mathcal{H}_i \otimes \mathcal{H}_j$, where $i$ and $j$ are labels that we assign to the qubits. In this case the computational basis is $\mathcal{B}^{\otimes 2} = \{|a\rangle\mid a \in \F_2^2\}$, where $|a\rangle = |a_1\rangle\otimes|a_2\rangle$. We will also use the \emph{maximally entangled state}
\begin{equation}
\cPhi = \frac{1}{\sqrt{2}} \tpar{\cb{00} + \cb{11}}.
\label{eq:Ebit}
\end{equation}

For $a,b \in \F_2$, the \emph{Pauli rotations} are defined as
\begin{gather*}
\Xm = \bbmatrix{0 & 1 \\ 1 & 0} = \sum_{i=0}^1 \cb{i + 1} \rb{i} \Rightarrow \Xm^b = \sum_{i=0}^1 \cb{i + b} \rb{i}, \label{eq:Pauli X} \\
\Zm = \bbmatrix{1 & 0 \\ 0 & -1} = \sum_{i=0}^1 (-1)^i \cb{i} \rb{i} \Rightarrow \Zm^a = \sum_{i=0}^1 (-1)^{ai} \cb{i} \rb{i}. \label{eq:Pauli Z}
\end{gather*}

For $a,b\in \F_2$, the \emph{Weyl operator} is defined as
\begin{equation}
\W{a,b} = \Zm^a \Xm^b  = (-1)^{ab} \sum_{j=0}^1 (-1)^{aj} \cb{j + b} \rb{j}.
\label{eq:Weyl operator}
\end{equation}
Notice that, by the bijection given in~(\ref{eq:Field isomorphism}), we can equivalently define the Weyl operator with an element of $\F_4$ as input. We will write $\mathbf{W}_i(a,b)$ when the Weyl operator acts on the qubit with label $i$. With this notation, the following properties of the Weyl operator are well-known and easy to verify:
\begin{align}
\mathbf{W}(a,b)^{\!\dagger}&= (-1)^{ab} \mathbf{W}(a,b), \label{eq:Weyl transpose} \\
\mathbf{W}(a_1, b_1) \mathbf{W}(a_2, b_2) &= (-1)^{a_2 b_1} \mathbf{W}(a_1 + a_2, b_1 + b_2), \label{eq:Weyl sum} \\
\mathbf{W}(a,b)^{\!\dagger}\mathbf{W}(a,b) &= \mathbf{W}(a,b) \mathbf{W}(a,b)^{\!\dagger}= \Im_2, \label{eq:Weyl orthogonality} \\
\mathbf{W}_2(a,b) \cPhi &= (-1)^{ab} \mathbf{W}_1(a,b) \cPhi, \label{eq:Weyl system moving}
\end{align}

\noindent where $\cPhi$ is defined on qubits $\H_1 \otimes \H_2$ and $\Im_2$ is the $2 \times 2$ identity matrix.
The set
\begin{equation*}
\mathcal{B}_{\F_2^2} \coloneqq \gpar{\BMm{a,b} \coloneqq \mathbf{W}_1(a,b) \pdm{\Phi} \mathbf{W}_1(a,b)^{\!\dagger}\mid a,b \in \F_2}
\label{eq:Bell POVM}
\end{equation*}
is a projection-valued measure (PVM), that is, the matrix $\BMm{a,b}$ is a Hermitian projector for any $(a,b) \in \F_2^2$ and satisfies the \emph{completeness equation}~\cite{NC00}. The \emph{Bell measurement} is the measurement defined by the PVM $\mathcal{B}_{\F_2^2}$ described above.

In the following we will require the \emph{two-sum transmission} protocol~\cite{song2019capacitycollusion}. 

\subsubsection*{Two-sum transmission protocol} This protocol allows to send the sum of two pairs of classical bits by communicating two qubits. Suppose that Alice and Bob possess a qubit $\H_A$ and $\H_B$, respectively, and share the maximally entangled state $\cPhi \in \H_A \otimes \H_B$. They would like to send the sum of Alice's information $\tpar{a_1,a_2} \in \F_2^2$ and Bob's information $\tpar{b_1,b_2} \in \F_2^2$ to Carol through two quantum channels. Notice that there is no communication between Alice and Bob, and that neither of them learns data from the other party. The protocol is depicted in \autoref{fig:Two-sum} and is given as follows:
\begin{arabiclist}
	\item Alice and Bob apply the unitaries $\Wi{A}{a_1,a_2}$ on $\H_A$ and $\Wi{B}{b_1,b_2}$ on $\H_B$, respectively;
	\item Alice and Bob send the qubits $\H_A$ and $\H_B$, respectively, over \emph{noiseless} quantum channels;
	\item Carol performs a Bell measurement on the system $\H_A \otimes \H_B$ and obtains $\tpar{a_1 + b_1,a_2 + b_2}$ as the protocol output.
\end{arabiclist}

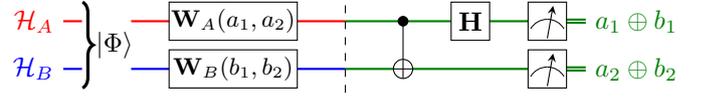
\begin{figure}[ht!]
\centering
\begin{tikzpicture}[scale=1.200000,x=1pt,y=1pt]
\filldraw[color=white] (0.000000, -7.500000) rectangle (163.000000, 22.500000);
\usetikzlibrary{decorations.pathreplacing,decorations.pathmorphing}
\draw[color=red,thick] (0.000000,15.000000) -- (88.000000,15.000000);
\draw[color=black!50!green,thick] (88.000000,15.000000) -- (151.000000,15.000000);
\draw[color=black!50!green,thick] (151.000000,14.500000) -- (163.000000,14.500000);
\draw[color=black!50!green,thick] (151.000000,15.500000) -- (163.000000,15.500000);
\draw[color=red] (0.000000,15.000000) node[left] {$\mathcal{H}_A$};
\draw[color=blue,thick] (0.000000,0.000000) -- (88.000000,0.000000);
\draw[color=black!50!green,thick] (88.000000,0.000000) -- (151.000000,0.000000);
\draw[color=black!50!green,thick] (151.000000,-0.500000) -- (163.000000,-0.500000);
\draw[color=black!50!green,thick] (151.000000,0.500000) -- (163.000000,0.500000);
\draw[color=blue] (0.000000,0.000000) node[left] {$\mathcal{H}_B$};
\draw[fill=white,color=white] (6.000000, -6.000000) rectangle (21.000000, 21.000000);
\draw (16.00000, 7.500000) node {$\cPhi$};
\draw[decorate,decoration={brace,mirror,amplitude = 4.000000pt},very thick] (6.000000,-6.000000) -- (6.000000,21.000000);
\begin{scope}
\draw[fill=white] (53.000000, 15.000000) +(-45.000000:28.284271pt and 8.485281pt) -- +(45.000000:28.284271pt and 8.485281pt) -- +(135.000000:28.284271pt and 8.485281pt) -- +(225.000000:28.284271pt and 8.485281pt) -- cycle;
\clip (53.000000, 15.000000) +(-45.000000:28.284271pt and 8.485281pt) -- +(45.000000:28.284271pt and 8.485281pt) -- +(135.000000:28.284271pt and 8.485281pt) -- +(225.000000:28.284271pt and 8.485281pt) -- cycle;
\draw (53.000000, 15.000000) node {\small{$\mathbf{W}_{A}\!\left( a_1,a_2 \right)$}};
\end{scope}
\begin{scope}
\draw[fill=white] (53.000000, -0.000000) +(-45.000000:28.284271pt and 8.485281pt) -- +(45.000000:28.284271pt and 8.485281pt) -- +(135.000000:28.284271pt and 8.485281pt) -- +(225.000000:28.284271pt and 8.485281pt) -- cycle;
\clip (53.000000, -0.000000) +(-45.000000:28.284271pt and 8.485281pt) -- +(45.000000:28.284271pt and 8.485281pt) -- +(135.000000:28.284271pt and 8.485281pt) -- +(225.000000:28.284271pt and 8.485281pt) -- cycle;
\draw (53.000000, -0.000000) node {\small{$\mathbf{W}_{B}\!\left( b_1,b_2 \right)$}};
\end{scope}
\draw (106.000000,15.000000) -- (106.000000,0.000000);
\filldraw (106.000000, 15.000000) circle(1.500000pt);
\begin{scope}
\draw[fill=white] (106.000000, 0.000000) circle(3.000000pt);
\clip (106.000000, 0.000000) circle(3.000000pt);
\draw (103.000000, 0.000000) -- (109.000000, 0.000000);
\draw (106.000000, -3.000000) -- (106.000000, 3.000000);
\end{scope}
\begin{scope}
\draw[fill=white] (127.000000, 15.000000) +(-45.000000:8.485281pt and 8.485281pt) -- +(45.000000:8.485281pt and 8.485281pt) -- +(135.000000:8.485281pt and 8.485281pt) -- +(225.000000:8.485281pt and 8.485281pt) -- cycle;
\clip (127.000000, 15.000000) +(-45.000000:8.485281pt and 8.485281pt) -- +(45.000000:8.485281pt and 8.485281pt) -- +(135.000000:8.485281pt and 8.485281pt) -- +(225.000000:8.485281pt and 8.485281pt) -- cycle;
\draw (127.000000, 15.000000) node {$\mathbf{H}$};
\end{scope}
\draw[fill=white] (145.000000, 9.000000) rectangle (157.000000, 21.000000);
\draw[very thin] (151.000000, 15.600000) arc (90:150:6.000000pt);
\draw[very thin] (151.000000, 15.600000) arc (90:30:6.000000pt);
\draw[->,>=stealth] (151.000000, 9.600000) -- +(80:10.392305pt);
\draw[fill=white] (145.000000, -6.000000) rectangle (157.000000, 6.000000);
\draw[very thin] (151.000000, 0.600000) arc (90:150:6.000000pt);
\draw[very thin] (151.000000, 0.600000) arc (90:30:6.000000pt);
\draw[->,>=stealth] (151.000000, -5.400000) -- +(80:10.392305pt);
\draw[color=black!50!green] (163.000000,15.000000) node[right] {${a_1 \oplus b_1}$};
\draw[color=black!50!green] (163.000000,0.000000) node[right] {${a_2 \oplus b_2}$};
\draw[dashed] (88.000000, -7.500000) -- (88.000000, 22.500000);
\end{tikzpicture}
\caption{Two-sum transmission protocol: Alice and Bob send the sum of their bits to Carol.}
\label{fig:Two-sum}
\end{figure}
Later on, we will see that, with the aid of isomorphism~\eqref{eq:Field isomorphism}, the two-sum protocol can be used to transmit the sum of values from an extension field.

\subsection{Private Information Retrieval}
Consider a storage system storing $m$ files $x^i, \ i\in[m]$, as described in Section~\ref{sec:DSS}.

In a PIR protocol a user desiring the $K$-th file $x^K$ chooses a query $Q^K=\{Q_1^K,\ldots,Q_n^K\}$ from a query space $\mathcal{Q}$ and transmits $Q_s^K$ to the $s$-th servers.
In the non-quantum PIR setting the response $A_s^K$ from the $s$-th server is a deterministic function of the received query $Q_s^K$ and the shares of the (encoded) files it stores. We denote by $A^K = \{A_1^K,\ldots, A_n^K\}$ the set of responses from all servers. In this work, we consider an extended setting where the user and the servers are also allowed to communicate quantum systems. Briefly, in this QPIR setting, we have $n$ servers and server $s$ possesses qubits that share entanglement with qubits stored by servers $s-1$ and $s+1$, respectively. The total number of pairs which share a maximally entangled state is denoted by $\q{ent}$. Each server applies some standard quantum operations depending on (a function of) the received query and the shares of the (encoded) files it stores to its qubits (\emph{e.g.}, applying a Weyl operator on a qubit or a Bell measurement on a pair of qubits) and \emph{responds} by sending the remaining (non-measured) qubits to the user. The total number of qubits that the servers prepare at the beginning of the protocol is denoted by $\q{in}$, while the total number of qubits that are transmitted from the servers to the user is denoted by $\q{out}$.


\begin{definition}[Correctness]\label{def:correctness}
A QPIR protocol is said to be \emph{correct} if the user can retrieve the desired file $x^K, K\in [m]$ from the responses of the servers.
\end{definition}

As usual, we assume honest-but-curious servers who follow the assigned protocol, but might try to determine the index $K$ of the file desired by the user.
\begin{definition}[Privacy with $t$-Collusion]\label{def:privacy}
  \emph{User privacy:} Any set of at most $t$ colluding servers learns no information about the index $K$ of the desired file.\\
  \emph{Server privacy:} The user does not learn any information about the files other than the requested one.\\
  \emph{Symmetric scheme:} A scheme with both user and server privacy is called  \emph{symmetric}.
\end{definition}

Formally, the QPIR rate in this setting is defined in the following. As customary, we assume that the size of the query vectors is negligible compared to the size of the files.
\begin{definition}[QPIR Rate]\label{def:QPIRrate}
  For a QPIR scheme, \ie a PIR scheme with classical files, classical queries from user to servers and quantum responses from servers to user, the \emph{rate} is the number of retrieved information bits of the requested file per downloaded response qubit, \emph{i.e.},
  \begin{equation*}
    R_{\mathsf{QPIR}} = 
    \frac{\text{\#information bits in a file}}{\text{\#downloaded qubits}} 
    = \frac{\beta k \log_2(q) }{\q{out}} \ .
  \end{equation*}
\end{definition}

For comparison, we also informally define the PIR rate in the non-quantum setting as the number of retrieved information bits of the requested file per downloaded response bit, \emph{i.e.},
\begin{equation}\label{eq:PIRrate}
R_{\mathsf{PIR}} = 
\frac{\text{\#information bits in a file}}{\text{\#downloaded bits}} \ .
\end{equation}
The PIR capacity is the supremum of PIR rates of all possible PIR schemes, for a fixed parameter setting.

\begin{remark}
\label{rem:Qubit capacity}
In this setting we assume that the user does not share any entanglement with the servers. Hence, the maximal number of information bits obtained when receiving a qubit, \emph{i.e.}, the number of bits that can be communicated by transmitting a qubit from a server to the user without privacy considerations, is the binary logarithm of the dimension of the corresponding Hilbert space~\cite{holevo1973bounds}. In this context, with a single qubit we cannot transmit more than 1 bit of classical information.
\end{remark}

\section{$[3,2]$-coded QPIR example with no collusion} \label{sec:32Example}

To provide some intuition on the general scheme presented in Section~\ref{S-col}, we begin with a simple illustrative example.
Let us assume $n=3$ servers which contain, respectively, the pieces $y^i_1 = x^i_1$, $y^i_2 = x^i_2$ and $y^i_3 = x^i_1 + x^i_2$ for $i \in \qpar{m}$. For simplicity, we suppose that $x^i_1,x^i_2 \in \F_2^2$ for each $i \in \qpar{m}$. Note that the storage is MDS-coded (single parity check code). In our QPIR scheme we assume that servers have access to standard quantum resources such as Bell-state preparation and Bell-basis measurement.

\textbf{Preparation Step}. For each $p \in \qpar{2}$ the servers prepare the following qubits and states. Server 2 possesses 4 qubits $\H_2^{\Ls,(p)}$, $\H_2^{\Rs,(p)}$, $\H_2^{(p)}$ and $\H_2^{\As,(p)}$, and the first and the last server possess qubits $\H_1^{(p)}$ and $\H_3^{(p)}$, respectively. Each pair $\tpar{\H_1^{(p)},\H_2^{\Ls,(p)}}, \tpar{\H_2^{\Rs,(p)},\H_3^{(p)}}$ and $\tpar{\H_2^{(p)},\H_2^{\As,(p)}}$ shares a maximally entangled state $\cPhi^{(p)}$. Thus, in this step servers prepare a total of $\q{in} = 2 \cdot (4+1+1) = 12$ qubits and $\q{ent} = 2 \cdot 3 = 6$ pairs which share a maximally entangled state.

The protocol for querying the $K$-th file $x^K$ is depicted in \autoref{fig:QPIR example with PC-code} and is described as follows:
\begin{arabiclist}
	\item Suppose we want to retrieve the piece $y_p^K$ from server $p \in \qpar{2}$. The user generates a uniformly random bit vector $Q^{(p)} = \tpar{Q^{1,(p)}, \ldots,Q^{m,(p)}}$. Let $Q_p^{(p)} \= Q^{(p)} + e_K^m$ and $Q_s^{(p)} = Q^{(p)}$ for any $s \in \qpar{n} \backslash \gpar{p}$, where $Q^{(p)} + e_K^m$ means that we are flipping the $K$-th bit of the query $Q^{(p)}$.
	\item The user sends the query $Q_s^{(p)}$ to server $s$ for each $s \in \qpar{3}$.
	\item Server $s \in \qpar{3}$ computes $H_s^{(p)} = \sum_{i=1}^m Q_s^{i,(p)} y^i_s \in \F_2^2$. The first and the last server apply $\Wi{1}{H_1^{(p)}}$ and $\Wi{3}{H_3^{(p)}}$ to the qubits $\H_1^{(p)}$ and $\H_3^{(p)}$, respectively. Server 2 applies $\Wi{2^\Ls}{H_2^{(p)}}$ to the qubit $\H_2^{\Ls,(p)}$ and performs a Bell measurement on $\H_2^{\Ls,(p)} \otimes \H_2^{\Rs,(p)}$ whose outcome is denoted by $G_2^{(p)} \in \F_2^2$. Then, server 2 applies $\Wi{2}{G_2^{(p)}}$ to the qubit $\H_2^{(p)}$. This last operation initializes the two-sum transmission protocol (cf. Sec.~\ref{sec:QC}) between server 2 and the user: the input qubits $\H_2^{(p)} \otimes \H_2^{\As,(p)}$ are used to send $G_2^{(p)} + \tpar{0,0}$ to the user.
	\item Each server sends its qubit $\H_s^{(p)}$ to the user. Server 2 also sends its additional qubit $\H_2^{\As,(p)}$.
	\item The user performs a Bell measurement to the pair $\H_2^{(p)} \otimes \H_2^{\As,(p)}$ to retrieve $G_2^{(p)} + \tpar{0,0}$ via the two-sum transmission protocol. Then he applies $\Wi{3}{G_2^{(p)}}$ to the qubit $\H_3^{(p)}$ and performs a Bell measurement on $\H_1^{(p)} \otimes \H_3^{(p)}$, whose outcome is $y_p^K$ with probability 1.
	\item Repeat all the previous steps for every piece $p \in \qpar{2}$.
	\item The user reconstructs the desired file $x^K = \tpar{x_1^K,x_2^K} = \tpar{y_1^K,y_2^K}$.
\end{arabiclist}
\begin{figure}[h]
\centering
\includegraphics[width = 0.45\textwidth]{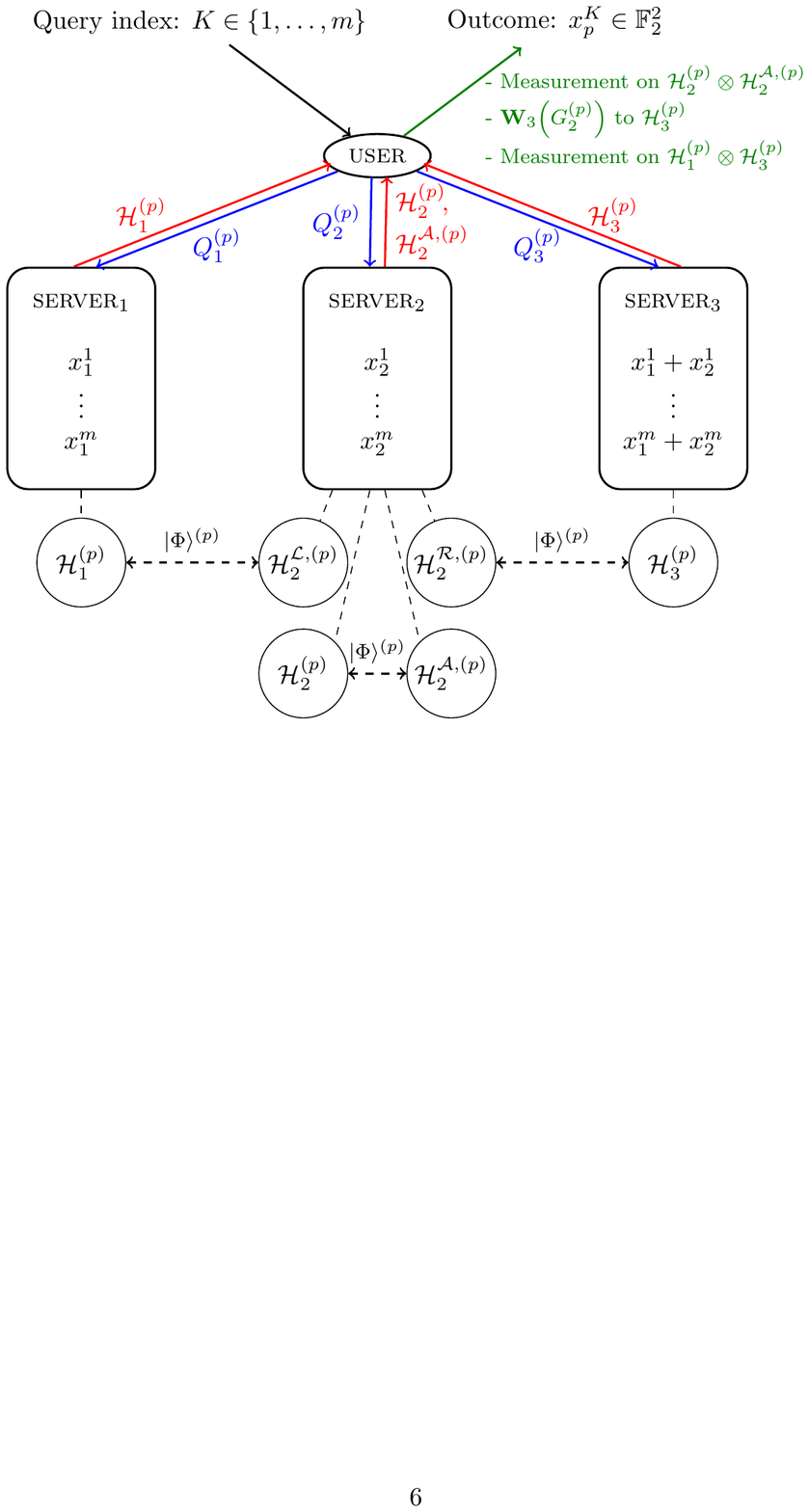}
\caption{QPIR protocol with a [3,2]-parity-check code, as described in Section~\ref{sec:32Example}. Step 2 is depicted in blue, Step 4 in red and Step 5 in green.} 
\label{fig:QPIR example with PC-code}
\end{figure}
\begin{figure}[h]
\centering
\includegraphics[width = 0.4\textwidth]{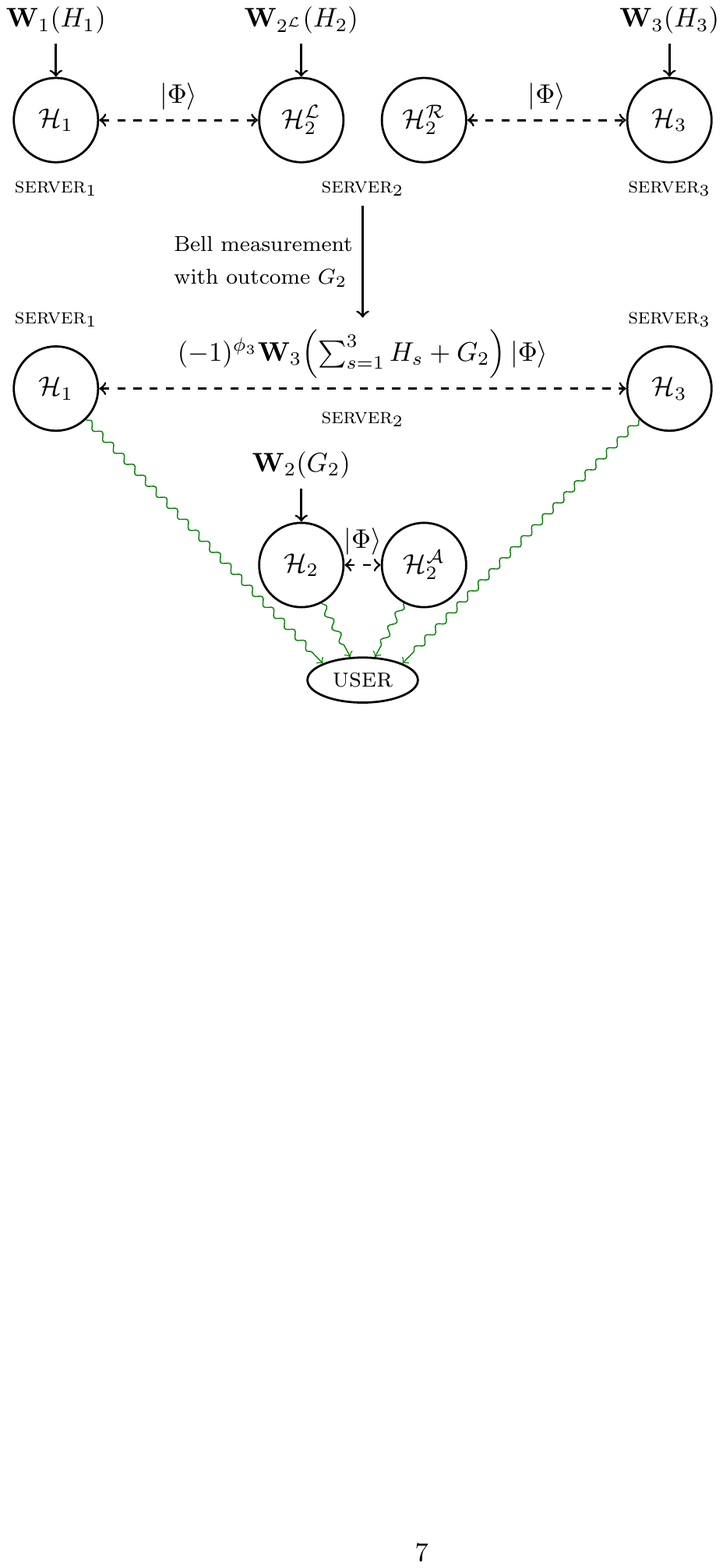}
\caption{Illustration of the operations performed by the servers in the $[3,2]$-coded QPIR example with no collusion, as described in Section~\ref{sec:32Example}.  The download step is depicted in green. These operations are performed for every $p \in \qpar{2}$. Here, we dropped the superscript $(p)$ for a clearer visualization.} 
\label{fig:QPIR operations with PC-code}
\end{figure}

The operations performed by the servers and the download step (Step 4) are visualized in \autoref{fig:QPIR operations with PC-code}. After Step 4, the user possesses the entangled pairs of qubits $\H_2^{(p)} \otimes \H_2^{\As,(p)}$ and $\H_1^{(p)} \otimes \H_3^{(p)}$ for each $p \in \qpar{2}$. The states of those two pairs are, respectively,
\begin{gather*}
\Wi{2}{G_2^{(p)}} \cPhi^{(p)}, \\
\begin{split}
& (-1)^{\phi_3^{(p)}} \Wi{3}{\sum_{s=1} H_s^{(p)} + G_2^{(p)}} \cPhi^{(p)} \\
& = (-1)^{\phi_3^{(p)}} \Wi{3}{y^K_p + G_2^{(p)}} \cPhi^{(p)},
\end{split}
\end{gather*}

\noindent where $\phi_3^{(p)} \in \F_2$ is determined upon $H_1^{(p)}, H_2^{(p)}, H_3^{(p)}$ and $G_2^{(p)}$. The last equality holds for every $p \in \qpar{2}$. In fact, assuming $p=1$,
\[ \begin{split}
\sum_{s=1}^3 H_s^{(1)} & = \sum_{i=1}^m \tpar{Q^i + \delta_{iK}} x^i_1 \! + \! \sum_{i=1}^m Q^i x^i_2 \! + \! \sum_{i=1}^m Q^i \tpar{x^i_1 + x^i_2} \\
& = \sum_{i=1}^m Q^i x^i_1 + x^K_1 + \sum_{i=1}^m Q^i x^i_2 + \sum_{i=1}^m Q^i \tpar{x^i_1 + x^i_2} \\
& = \sum_{i=1}^m Q^i \overbrace{\tpar{x^i_1 + x^i_2 + x^i_1 + x^i_2}}^0 + x^K_1 = x^K_1.
\end{split} \]

\noindent The proof for $p=2$ is the same. Performing a Bell measurement on the first pair, the user retrieves $G_2^{(p)}$ and then applies $\W{G_2^{(p)}}$ to the qubit $\H_3^{(p)}$. Doing so, the state of the second pair becomes
\[ \begin{split}
& (-1)^{\phi_3^{(p)}} \Wi{3}{G_2^{(p)}} \Wi{3}{y^K_p + G_2^{(p)}} \cPhi^{(p)} \\
& \smalloverbrace{=}^{\eqref{eq:Weyl sum}} (-1)^{\phi_u^{(p)}} \mathbf{W}_3 \Bigl( y^K_p + \overbrace{G_2^{(p)} + G_2^{(p)}}^0 \Bigr) \cPhi^{(p)} \\
& = (-1)^{\phi_u^{(p)}} \Wi{3}{y^K_p} \cPhi^{(p)},
\end{split} \]

\noindent where $\phi_u^{(p)} \in \F_2$ is determined upon $\phi_3^{(p)}, G_2^{(p)}$ and $y_p^K + G_2^{(p)}$. Performing now the Bell measurement on the second pair, the user retrieves $y^K_p$ with probability 1.

User and server secrecy follow by Lemma~\ref{lem:Secrecy}. We now notice that a total of 4 qubits are transmitted to the user in each instance of Step 4. Hence, the rate is $R_\mathsf{QPIR} = \frac{2 \cdot 2}{2 \cdot 4} = \frac{1}{2}$, since we recover four bits and download a total of $\q{out} = 8$ qubits over two rounds.

Here, we are not taking advantage of the two-sum transmission protocol in order to send the component $G_2^{(p)}$ during each round from server 2 to the user. For details to a more practical solution, we refer the reader to Remark~\ref{rem:Odd case}.

\section{$\qpar{n,k}$-MDS coded QPIR with $t$-collusion}\label{S-col}

Let $n$ be the number of servers, $L = \min \gpar{l \in \N : 4^l \geq n}$ and $\F_{4^L}$ be the finite field with $4^L$ elements. We present a protocol with user and server secrecy, in which the user retrieves a file $x^K$ from a DSS of $n$ servers coded with an $\qpar{n,k}$-MDS storage code $\Cs$ with base field $\F_{4^L}$. We consider the base field $\F_{4^L}$ for two reasons: first, MDS codes are known to exist for any length $n$ and dimension $k$ with $k\leq n$, if the number of elements in the finite field is at least the length $n$ of the code (cf. \cite[Chapter~10]{macwilliams1977theory}); second, the (encoded) files need to be written as sequences of elements in $\F_4$ via bijection~\ref{eq:Field isomorphism} in order to be transmitted, since the Weyl operator requires a symbol of $\F_4$ as input. We denote the $k \times n$ generator matrix of the code by $\GC$. In this setting, we can protect against $\tpar{n-k}$-collusion. Each file $x^i$ is split into $k$ pieces and then divided into $\b$ stripes $x_b^i = \tpar{x_{b,1}^i,\ldots,x_{b,k}^i}$, \ie the set of files is given by $\mathcal{X} = \left\{x_b^i \in \tpar{\F_{4^L}}^k : i \in \qpar{m}, b \in \qpar{\b} \right\}$ with file size $F = k \b \log_2 \tpar{4^L} = 2 k L \b$. Server $s \in \qpar{n}$ stores the symbols $y^i_{\cdot,s} = \tpar{y^i_{1,s},\ldots,y^i_{\b,s}}$, where $y^i_{b,s}$ is the $s$-th element of the vector $x^i_b \GC$.

Since the storage code is MDS, we need any $k$ symbols of the codeword in order to retrieve the file \cite{macwilliams1977theory}. Without loss of generality, we will retrieve the symbols stored in the first $k$ servers, \ie we will retrieve the symbols $y^K_{\cdot,s}$ from server $s \in \qpar{k}$ after $\b$ rounds.

\subsection{A coded QPIR scheme} 
\label{sec:scheme}

\noindent \textbf{Preparation Step}. For each $p \in \qpar{k}$ the servers prepare the following qubits and states. Server $s \in \gpar{2,\ldots,n-1}$ possesses $3L\b$ qubits $\H_s^{\Ls,(l,b,p)}$, $\H_s^{\Rs,(l,b,p)}$ and $\H_s^{(l,b,p)}$, where $l \in \qpar{L}$, $b \in \qpar{\b}$. The first and the last server possess $L\b$ qubits $\H_1^{(l,b,p)} = \H_1^{\Rs,(l,b,p)}$ and $\H_{n}^{(l,b,p)} = \H_{n}^{\Ls,(l,b,p)}$, respectively. If $n$ is odd, server $n-1$ possesses $L\b$ additional qubits $\H_{n-1}^{\As,(l,b,p)}$. Each pair $\tpar{\H_s^{\Rs,(l,b,p)},\H_{s+1}^{\Ls,(l,b,p)}}$ and $\tpar{\H_{2c}^{(l,b,p)},\H_{2c+1}^{(l,b,p)}}$ shares a maximally entangled state $\cPhi^{(l,b,p)}$ for any $s \in \qpar{n-1}$ and for any $c \in \qpar{\floor{\frac{n}{2}} - 1}$. If $n$ is odd, then an additional $\cPhi^{(l,b,p)}$ is shared between $\H_{n-1}^{(l,b,p)}$ and $\H_{n-1}^{\As,(l,b,p)}$. Thus, if $n$ is even, in this step servers prepare a total of $\q{in} = k (3 L \b (n-2) + 2 L \b) = k L \b (3n - 4)$ qubits and $\q{ent} = k L \b (n-1) + k L \b \tpar{\frac{n}{2} - 1} = k L \b \frac{3n-4}{2}$ pairs which share a maximally entangled state. If $n$ is odd, they prepare a total of $\q{in} = k (3 L \b (n-2) + 3 L \b) = 3 k L \b (n - 1)$ qubits and $\q{ent} = k L \b (n-1) + k L \b \tpar{\frac{n-1}{2} - 1} + k L \b = 3 k L \b \frac{n-1}{2}$ pairs which share a maximally entangled state.

The protocol for querying the $K$-th file $x^K$ is depicted in \autoref{fig:QPIR with MDS-code}  and is described as follows:
\begin{arabiclist}
	\item Suppose the user wants to retrieve the symbols $y^K_{\cdot,p}$ stored in server $p \in \qpar{k}$. Then, he generates $n-k$ independent and uniformly random vectors $Z^{(p)}_1,\ldots,Z^{(p)}_{n-k} \in \tpar{\F_{4^L}}^m$, encodes them as codewords of the dual code of $\Cs$ and adds a 1 in position $K$ to the query directed to server $p$. In other words, the user builds the queries $Q^{(p)}_1,\ldots,Q^{(p)}_n$ multiplying the random vectors by the generator matrix $\GCp$ of the dual code $\Csp$, \ie
	\[
	\bbmatrix{Q^{(p)}_1 & \cdots & Q^{(p)}_n}=\bbmatrix{Z^{(p)}_1 &\cdots & Z^{(p)}_{n-k}}\cdot \GCp +\bm{\xi}_{K,p},
	\]
	where $\bm{\xi}_{K,p}$ is an $m \times n$ matrix whose $\tpar{K,p}$-th entry is 1 and all the other entries are 0.
	\item The user send query $Q_s^{(p)}$ to each server $s\in \qpar{n}$.
	\item In round $b$, server $s \in \qpar{n}$ computes $H_s^{(b,p)} = \scalar{Q^{(p)}_s}{y_{b,s}} \in \F_{4^L}$, where $y_{b,s} = \tpar{y^1_{b,s},\ldots,y^m_{b,s}}$, and divides it into $L$ elements $H_s^{(1,b,p)},\ldots,H_s^{(L,b,p)}$ of $\F_2^2$ by the bijection $\varphi$ defined in~\eqref{eq:Field isomorphism}. For each $l \in \qpar{L}$, the servers perform these steps:
	\begin{alphalist}
		\item server $1$ and server $n$ apply $\W{H_1^{(l,b,p)}}$ and $\W{H_n^{(l,b,p)}}$ to the qubits $\H_1^{(l,b,p)}$ and $\H_n^{(l,b,p)}$, respectively;
		\item server $s \in \gpar{2,\ldots,n-1}$ applies $\W{H_s^{(l,b,p)}}$ to the qubit $\H_s^{\Ls,(l,b,p)}$ and performs a Bell measurement on $\H_s^{\Ls,(l,b,p)} \otimes \H_s^{\Rs,(l,b,p)}$ the outcome of which is denoted by $G_s^{(l,b,p)} \in \F_2^2$. Then, server $s$ applies $\W{G_s^{(l,b,p)}}$ to the qubit $\H_s^{(l,b,p)}$. This last operation initializes the two-sum transmission protocol (cf. Sec.~\ref{sec:QC}) between servers $2c$ and $2c+1$, $c \in \qpar{\floor{\frac{n}{2}} - 1}$, and the user: the input qubits $\H_{2c}^{(l,b,p)} \otimes \H_{2c+1}^{(l,b,p)}$ are used to send $G_{2c}^{(l,b,p)} + G_{2c + 1}^{(l,b,p)}$ to the user. If $n$ is odd, the operation on server $n-1$ initializes the two-sum transmission protocol between server $n-1$ and the user: the input qubits $\H_{n-1}^{(l,b,p)} \otimes \H_{n-1}^{\As,(l,b,p)}$ are used to send $G_{n-1}^{(l,b,p)} + (0,0)$ to the user.
	\end{alphalist}
	\item Each server sends its $L$ qubits $\H_s^{(l,b,p)}$ to the user. If $n$ is odd, server $n-1$ sends its additional $L$ qubits $\H_{n-1}^{\As,(l,b,p)}$.
	\item For each $l \in \qpar{L}$, the user performs the following steps:
	\begin{alphalist}
		\item If $n$ is even, he performs a Bell measurement on each pair $\H_{2c}^{(l,b,p)} \otimes \H_{2c+1}^{(l,b,p)}$ to retrieve $G_{2c}^{(l,b,p)} + G_{2c + 1}^{(l,b,p)}$ via the two-sum transmission protocol for every $c \in \qpar{\frac{n}{2} - 1}$, and computes $G^{(l,b,p)} = \sum_{c=1}^{\frac{n}{2} - 1} \tpar{G_{2c}^{(l,b,p)} + G_{2c + 1}^{(l,b,p)}}$. If $n$ is odd, he also performs a Bell measurement on the pair $\H_{n-1}^{(l,b,p)} \otimes \H_{n-1}^{\As,(l,b,p)}$ to retrieve $G_{n-1}^{(l,b,p)}$ and computes $G^{(l,b,p)} = \sum_{c=1}^{\floor{\frac{n}{2}} - 1} \tpar{G_{2c}^{(l,b,p)} + G_{2c + 1}^{(l,b,p)}} + G_{n-1}^{(l,b,p)}$.
		\item He applies $\W{G^{(l,b,p)}}$ to the qubit $\H_n^{(l,b,p)}$ and performs a Bell measurement on $\H_1^{(l,b,p)} \otimes \H_n^{(l,b,p)}$, whose outcome is $y_{b,p}^{K,(l)}$ with probability 1.
	\end{alphalist}
	Finally, he reconstructs $y_{b,p}^K$ from the $L$ outcomes through the bijection \eqref{eq:Field isomorphism}.
	\item Repeat Steps 3, 4 and 5 for every round $b \in \qpar{\b}$.
	\item Repeat all the previous steps for every piece $p \in \qpar{k}$.
	\item Now the user possesses $\gpar{y_{b,p}^K : b \in \qpar{\b},p \in \qpar{k}}$. First, he reconstructs $y^K_{\cdot,p}$ from the $\b$ elements of $\F_{4^L}$ for each $p \in \qpar{k}$ and builds $y^K = \tpar{y^K_{\cdot,1},\ldots,y^K_{\cdot,k}}$. Then, he computes the desired file $x^K$ from the equation $y^K = x^K \GC'$, where $\GC'$ is the $k \times k$ submatrix of $\GC$ constructed with its first $k$ columns.
\end{arabiclist}

\begin{figure}[h]
\begin{center}
\includegraphics[width = \columnwidth]{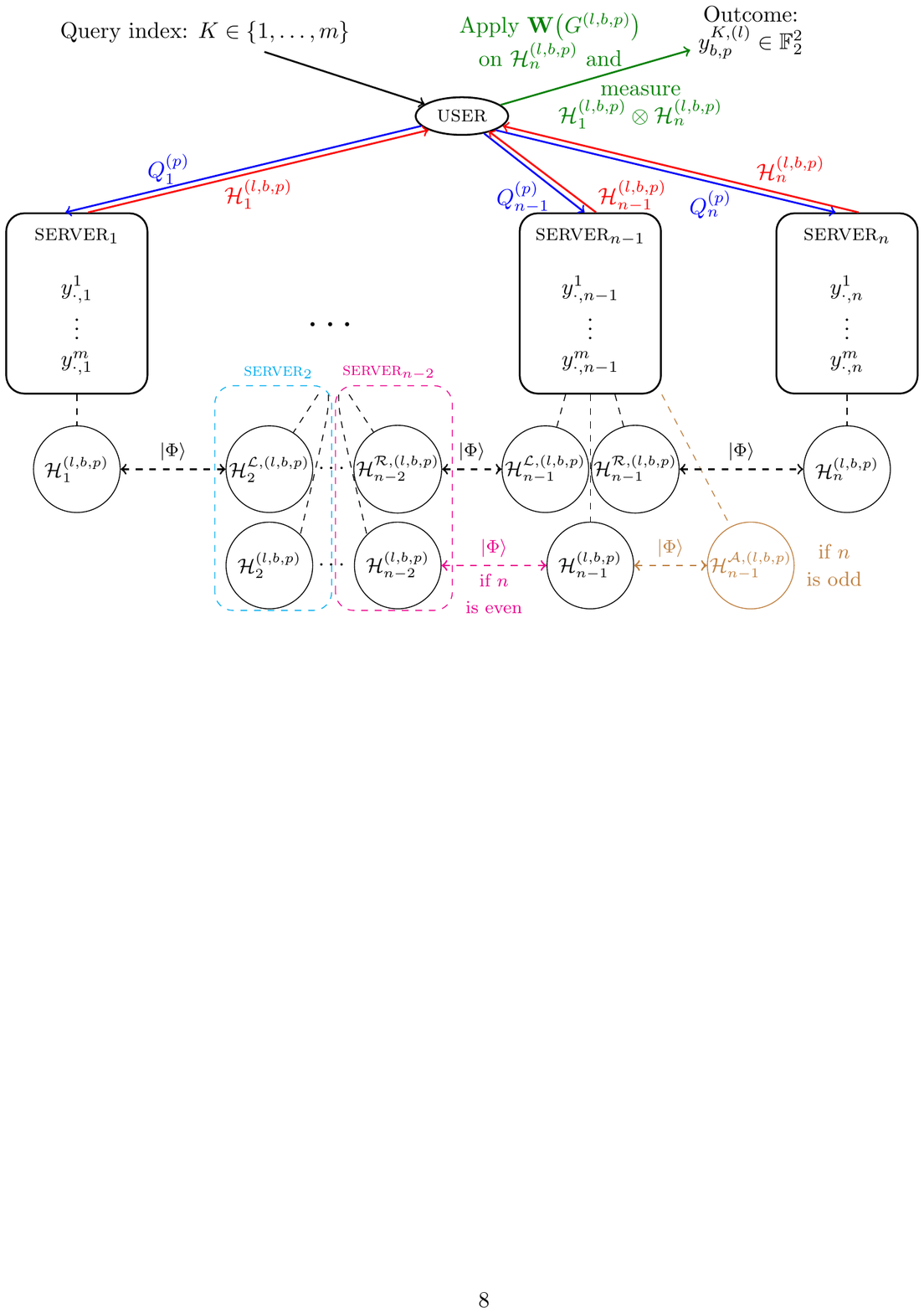}
\end{center}
\caption{QPIR protocol with an $\qpar{n,k}$-MDS code as described in Section~\ref{sec:scheme}. Step 2 is depicted in blue, Step 4 in red and Step 5 in green. Steps 4 and 5 are repeated $L\b$ times per piece. The maximally entangled state $\cPhi$ in the figure denotes $\cPhi^{(l,b,p)}$.} 
\label{fig:QPIR with MDS-code}
\end{figure}

\begin{remark}
In the case of an odd number $n$ of servers, since we are not taking advantage of the two-sum transmission protocol in order to send the components $G_{n-1}^{(l,b,p)}$ during each round from server $n-1$ to the user, we could use classical channels instead of quantum operations and quantum channels to improve the stability of the transmission. We choose to keep the quantum operations in order to have the servers' response completely described by quantum operations. A more practical way to keep the quantum operations, which uses less quantum resources but achieves the same result, would be to simply send the qubits from server $n-1$ each communicating one single bit of information without using the entanglement and the Bell measurement. In this scenario, server $n-1$ prepares the basis state $\cb{G_{n-1}^{(l,b,p)}}$ on $\H_{n-1}^{(l,b,p)} \otimes \H_{n-1}^{\As,(b,p)}$, while the user, once he has received the qubits from the servers, performs a basis measurement on $\H_{n-1}^{(l,b,p)} \otimes \H_{n-1}^{\As,(b,p)}$ to retrieve $G_{n-1}^{(l,b,p)}$. Instead, for ease of notation, we choose to describe the generation of this response consistent with the generation of the other responses. 
\label{rem:Odd case}
\end{remark}

\subsection{Properties of the coded QPIR scheme} \label{sec:Properties of coded QPIR}
\begin{lemma}
The scheme of Section~\ref{sec:scheme} is correct, \emph{i.e.}, fulfills Definition~\ref{def:correctness}.
\end{lemma}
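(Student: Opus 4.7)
The plan is to show that, for each piece $p \in \qpar{k}$, round $b \in \qpar{\b}$, and sub-block $l \in \qpar{L}$, the user's final Bell measurement in Step~5(b) returns $\varphi(y^K_{b,p})^{(l)}$ with probability $1$; the full file then follows by assembling these values via \eqref{eq:Field isomorphism} and inverting $\GC'$.

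First I would verify the classical identity $\sum_{s=1}^n H_s^{(b,p)} = y^K_{b,p}$ underlying the protocol. Using the query construction $\bbmatrix{Q^{(p)}_1 & \cdots & Q^{(p)}_n} = \bbmatrix{Z^{(p)}_1 & \cdots & Z^{(p)}_{n-k}} \GCp + \bm{\xi}_{K,p}$, I would split $\sum_s H_s^{(b,p)} = \sum_s \scalar{Q^{(p)}_s}{y_{b,s}}$ into a random part and a deterministic part. Swapping the order of summation, the random part becomes a sum over $i \in \qpar{m}$ of inner products between a codeword of $\Csp$ (the $i$-th component of the random mask) and the codeword $\tpar{y^i_{b,1},\ldots,y^i_{b,n}} \in \Cs$, each of which vanishes by duality. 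The deterministic contribution reduces to $y^K_{b,p}$ at once, since $\bm{\xi}_{K,p}$ has its unique nonzero entry at position $(K,p)$.

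Next I would analyse the quantum state after Step~3 for fixed $(l,b,p)$. The key observation is that each internal server $s \in \gpar{2,\ldots,n-1}$ performs an entanglement swap: it applies a Weyl rotation with parameter $H_s^{(l,b,p)}$ to $\H_s^{\Ls,(l,b,p)}$ and then Bell-measures $\H_s^{\Ls,(l,b,p)} \otimes \H_s^{\Rs,(l,b,p)}$. Starting from the chain of maximally entangled pairs $\tpar{\H_s^{\Rs,(l,b,p)}, \H_{s+1}^{\Ls,(l,b,p)}}$ in state $\cPhi^{(l,b,p)}$, and using \eqref{eq:Weyl sum} together with \eqref{eq:Weyl system moving} to transport each Weyl operator across a shared pair, an induction on $s = 1,2,\ldots,n$ shows that, conditional on the outcomes $G_s^{(l,b,p)}$ of the internal Bell measurements, the residual state on $\H_1^{(l,b,p)} \otimes \H_n^{(l,b,p)}$ is, up to an irrelevant global sign,
\[
\Wi{n}{\,\sum_{s=1}^n H_s^{(l,b,p)} \;+\; \sum_{s=2}^{n-1} G_s^{(l,b,p)}\,} \cPhi^{(l,b,p)}.
\]
In parallel, the auxiliary two-sum instances set up in Step~3(b) deliver the sums $G_{2c}^{(l,b,p)} + G_{2c+1}^{(l,b,p)}$ to the user (and, in the odd case, $G_{n-1}^{(l,b,p)}$ as well), so after Step~5(a) the user knows exactly $G^{(l,b,p)} = \sum_{s=2}^{n-1} G_s^{(l,b,p)}$.

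Finally I would combine the two strands. Applying $\Wi{n}{G^{(l,b,p)}}$ in Step~5(b), and invoking \eqref{eq:Weyl sum} together with \eqref{eq:Weyl orthogonality}, cancels the accumulated $G$-offset and reduces the state on $\H_1^{(l,b,p)} \otimes \H_n^{(l,b,p)}$ to $\Wi{n}{\varphi(y^K_{b,p})^{(l)}}\cPhi^{(l,b,p)}$ up to a global sign; the Bell measurement then returns $\varphi(y^K_{b,p})^{(l)}$ with probability $1$. Iterating over $l$, $b$, and $p$ reconstructs $y^K_{\cdot,p}$ for all $p \in \qpar{k}$ via \eqref{eq:Field isomorphism}, and the linear system $y^K = x^K \GC'$ is uniquely solvable because the first $k$ columns of any generator matrix of an MDS code form an information set, so $\GC'$ is invertible. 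The main obstacle is the inductive Weyl bookkeeping through the chain of Bell measurements; the remaining pieces---the duality identity and the MDS inversion---are then immediate.
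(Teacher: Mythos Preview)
Your proposal is correct and follows essentially the same approach as the paper: both verify the classical identity $\sum_{s=1}^n H_s^{(b,p)} = y^K_{b,p}$ via the duality $\GC(\GCp)^{\!\top}=0$, both identify the residual state on $\H_1^{(l,b,p)}\otimes\H_n^{(l,b,p)}$ as a Weyl-rotated Bell pair, and both appeal to additivity of the map~\eqref{eq:Field isomorphism} to handle $L>1$. The only difference is cosmetic: the paper imports the entanglement-swapping bookkeeping (your inductive Weyl tracking with the $G$-offset) directly from~\cite{song2019capacitycollusion}, whereas you spell it out explicitly; your added remark that $\GC'$ is invertible by the MDS property is implicit in the paper's Step~8.
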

\begin{proof}
The final state before the measurement performed by the user is reached in the same way as the one of the QPIR protocol in \cite{song2019capacitycollusion} for each $p \in \qpar{k}$. Thus, during round $b \in \qpar{\b}$ and for packet $l \in \qpar{L}$, that final state is
\begin{equation}\label{eq:finalState}
(-1)^{\widetilde{\phi}_n} \Wi{n}{\sum_{s=1}^n H_s^{(l,b,p)}} \cPhi,
\end{equation}
where $\widetilde{\phi}_n \in \F_2$ is determined upon $H_1^{(l,b,p)},\ldots,H_n^{(l,b,p)}$, $G_1^{(l,b,p)},\ldots,G_{n-1}^{(l,b,p)}$ and $G^{(l,b,p)}$.

We need to prove that the outcome of the measurement performed on the system $\H_1^{(l,b,p)} \otimes \H_n^{(l,b,p)}$ is $y_{b,p}^{K,(l)}$ for every $b \in \qpar{\b}$, $p \in \qpar{k}$ and $l \in \qpar{L}$. Suppose $L = 1$ and fixed $b,p$. We have $H_s^{(b,p)} = \scalar{Q^{(p)}_s}{y_{b,s}}$. Defining $e_p^n = \tpar{\delta_{1,p},\ldots,\delta_{n,p}}$ where $\delta_{i,j}$ is the Kronecker's delta\footnote{The Kronecker's delta is defined to be $\delta_{i,j}=1$ if $i=j$, $\delta_{i,j}=0$ otherwise.}, $\delta_{i,K} e_p^n$ is the $i$-th row of $\bm{\xi}_{K,p}$. Therefore,
\begin{align*}
& \sum_{s=1}^n H_s^{(b,p)} = \sum_{s=1}^n \scalar{Q^{(p)}_s}{y_{b,s}} = \sum_{s=1}^n \sum_{i=1}^m Q^{i,(p)}_s y_{b,s}^i \\ 
& = \sum_{i=1}^m \sum_{s=1}^n Q^{i,(p)}_s y_{b,s}^i = \sum_{i=1}^m \scalar{Q^{i,(p)}}{\smalloverbrace{\tpar{y_{b,1}^i,\ldots,y_{b,n}^i}}^{y^i_b}} \\
& = \sum_{i=1}^m y^i_b \tpar{Q^{i,(p)}}^{\!\dagger} = \sum_{i=1}^m \tpar{x^i_b \GC} \tpar{Z^{i,(p)} \GCp + \delta_{i,K} e_p^n}^{\!\dagger}\\
& = \sum_{i=1}^m x^i_b \overbrace{\GC \tpar{\GCp}\T}^0 \tpar{Z^{i,(p)}}^{\!\dagger} + x^K_b \GC \tpar{e_p^n}^{\!\dagger} \\ 
& \stackrel{\mathsf{(a)}}{=} y^K_b \tpar{e_p^n}^{\!\dagger} = y^K_{b,p},
\end{align*}
where $\mathsf{(a)}$ holds because the dual of a code is its nullspace.

\noindent If $L > 1$, we have that $H_s^{(l,b,p)}$ is the $l$-th entry of $\scalar{Q^{(p)}_s}{y_{b,s}}$ written as vector of $\tpar{\F_2^2}^L$. Then, by the same arguments as above, we get $\sum_{s=1}^n H_s^{(l,b,p)} = y^{K,(l)}_{b,p}$. As the bijection of \eqref{eq:Field isomorphism} between $\F_{4^L}$ and $\tpar{\F_2^2}^L$ preserves addition, we have that
\[ \begin{split}
\sum_{s=1}^n H_s^{(b,p)} & = \sum_{s=1}^n \tpar{H_s^{(1,b,p)},\ldots,H_s^{(L,b,p)}} \\
& \stackrel{\eqref{eq:Field isomorphism}}{=} \tpar{\sum_{s=1}^n H_s^{(1,b,p)},\ldots,\sum_{s=1}^n H_s^{(L,b,p)}} \\
& = \tpar{y^{K,(1)}_{b,p},\ldots,y^{K,(L)}_{b,p}} = y^K_{b,p}.
\end{split}
\]

The same argument applies to the packetization of the element $G^{(b,p)} = \tpar{G^{(1,b,p)}, \ldots,G^{(L,b,p)}}$. Thus, the correctness of the protocol is proved for every $n \in \N$.
\end{proof}

\begin{remark}
\label{rem:quantumLimitation}
The presented scheme can be viewed as the ``quantum version'' of the PIR scheme for MDS coded storage in \cite{tajeddine2018private}, for those familiar with the schemes of \cite{tajeddine2018private,freij2017private} we shortly elaborate on their connection to the presented scheme here. In \cite{tajeddine2018private} the vector holding the $n$ replies of the servers (denoted $H_s$ in our work) can be viewed as a random codeword of a single parity check (SPC) code plus one symbol of the desired file added in one (known) position. As is the definition of SPCs, summing over all components of this received vector leaves only the symbol of the desired file. This sum is exactly the sum in \eqref{eq:finalState}. In \cite{freij2017private} the scheme of \cite{tajeddine2018private} was generalized to any number of colluding servers $t$ and storage code dimension $k$ with $t+k\leq n$, allowing for the recovery of up to $n-(k+t-1)$ symbols in each round. Similarly, the received vector is a random codeword, but symbols of the desired file are added in multiple positions. To recover each of the desired symbols, the user performs \emph{multiple, distinct linear combinations} of the remaining positions. In this case, for a fixed dimension $k$ and collusion resistance $t$, every additional symbol that is retrieved gives an additional symbol of the desired file, as the same symbols can be used to perform the respective linear combination for each symbol. The property that allows our quantum PIR protocol, as well as the protocols in \cite{song2019capacitymultiple,song2019capacitycollusion}, to increase the rate compared to the classic setting is the fact that the user already receives a linear combination of these elements as an outcome of the quantum measurement, and therefore does not have to download all the symbols $H_s$ separately. However, this quantum measurement entails that only \emph{one single} linear combination can be performed, and thereby only one symbol retrieved per round. This leads to the condition $n-(k+t-1)=1$ or equivalently $n=k+t$. Note that since the restriction of an MDS code to a subset of positions is again an MDS code, we can always apply the protocol to a subset of $k+t$ servers if the actual number of servers in a storage system is $n>k+t$.  
\end{remark}

\begin{lemma} \label{lem:Secrecy}
The scheme of Section~\ref{sec:scheme} is symmetric and protects against  $t$-collusion in the sense of Definition~\ref{def:privacy}.
\end{lemma}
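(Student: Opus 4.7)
The plan is to establish the two privacy requirements of Definition~\ref{def:privacy} separately, each via a short reduction to a clean property of the setting.

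\textbf{User privacy.} Fix any $t$-subset $T \subseteq \qpar{n}$ of colluding servers and any round index $p \in \qpar{k}$. The queries $\{Q_s^{(p)}\}_{s\in T}$ are precisely the $t$ columns indexed by $T$ of the $m\times n$ matrix $[Z_1^{(p)}\cdots Z_{n-k}^{(p)}]\cdot \GCp + \bm{\xi}_{K,p}$. Because $\Cs$ is $\qpar{n,k}$-MDS, the dual $\Csp$ is $\qpar{n,n-k}$-MDS, so any $t = n-k$ columns of $\GCp$ form an invertible $(n-k)\times(n-k)$ submatrix, \ie $T$ is an information set of $\Csp$. Since the $Z_i^{(p)}$ are independent and uniform on $\F_{4^L}^m$, right-multiplying them by this invertible submatrix preserves uniformity, and adding the deterministic term $\bm{\xi}_{K,p}\big|_T$ again preserves uniformity. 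The resulting distribution of the observed queries is therefore uniform on $\F_{4^L}^{m\times t}$ and independent of $K$. As the randomness for different $p$'s is drawn independently, the same conclusion extends to the full transcript of queries $\{Q_s^{(p)}\}_{s\in T,\,p\in\qpar{k}}$, giving $t$-collusion user privacy.

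\textbf{Server privacy.} I would argue that the classical information the user extracts from the quantum responses is a function only of $\{y_{b,p}^K\}_{b\in\qpar{\b},\,p\in\qpar{k}}$, together with outcomes that are statistically independent of the database. The user's register at the end of the protocol consists of (i) the intermediate Bell measurements on $\H_{2c}^{(l,b,p)}\otimes\H_{2c+1}^{(l,b,p)}$ (and $\H_{n-1}^{(l,b,p)}\otimes\H_{n-1}^{\As,(l,b,p)}$ when $n$ is odd), which by the two-sum transmission protocol return $G_{2c}^{(l,b,p)}+G_{2c+1}^{(l,b,p)}$ (resp.\ $G_{n-1}^{(l,b,p)}$), and (ii) the final Bell measurements on $\H_1^{(l,b,p)}\otimes\H_n^{(l,b,p)}$, whose outcomes by the correctness calculation equal $y_{b,p}^{K,(l)}$. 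Each $G_s^{(l,b,p)}$ is the outcome of a Bell measurement on one half of a maximally entangled state, so it is uniform on $\F_2^2$ and statistically independent of $H_s^{(l,b,p)}$ and hence of the files. Therefore only the second group of outcomes carries file information, and it encodes precisely $y^K = x^K \GC'$, from which only $x^K$ is recovered through the invertible $k\times k$ submatrix $\GC'$.

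\textbf{Main obstacle.} The delicate point is the \emph{joint} independence, not merely marginal independence, of all the $G$-sums from the file data, across all servers, packets, stripes and rounds. I would handle this by iterating the Weyl-operator identities \eqref{eq:Weyl transpose}--\eqref{eq:Weyl system moving} along the chain of qubits from server $2$ through server $n-1$, exhibiting the sequence of local Weyl applications and Bell measurements as a chain of teleportation-like steps. The upshot is that the state on $\H_1^{(l,b,p)}\otimes\H_n^{(l,b,p)}$ depends on the files only through $\sum_{s=1}^n H_s^{(l,b,p)} = y_{b,p}^{K,(l)}$, while the intermediate outcomes $G_s^{(l,b,p)}$ are jointly uniform and independent of this sum. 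Fresh maximally entangled states and fresh queries for each triple $(l,b,p)$ then lift the factorization to the full transcript, yielding symmetric, $t$-collusion-resistant privacy.
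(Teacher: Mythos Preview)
Your argument is correct and follows essentially the same route as the paper: user privacy via the MDS property of $\Csp$ (any $t$ columns of $\GCp$ are independent, so the restricted queries are uniform regardless of $K$), and server privacy via the uniformity and file-independence of the intermediate outcomes $G_s^{(l,b,p)}$ together with the final state on $\H_1^{(l,b,p)}\otimes\H_n^{(l,b,p)}$ depending only on $y_{b,p}^{K,(l)}$. The only point the paper adds that you leave implicit is a short remark that colluding servers' entangled qubits cannot be exploited under the honest-but-curious model (deferring to \cite{song2019capacitycollusion}); conversely, your explicit handling of the joint-independence obstacle across all $(l,b,p)$ is more careful than the paper's one-line assertion of mutual independence.
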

\begin{proof}
Privacy in the quantum part of the protocol follows directly from the privacy of the protocol with all but one servers colluding. For details, we refer the reader to \cite{song2019capacitycollusion}. Intuitively, a colluding set of servers has access to their received queries and quantum states entangled with states of servers (possibly) outside of the colluding set. They could infer some more information about the file index by measuring these states, however, as this would consume the entanglement, the correctness of the protocol cannot be guaranteed anymore in this case. As the servers are honest-but-curious, \ie have to ensure that the user obtains the requested file, this is not possible with the specified number $\q{ent}$ of qubits sharing a maximally entangled state. On the other hand, any set of at most $t$ colluding nodes can in any case share all data between them, so increasing the number of qubits sharing a maximally entangled state within this colluding set does not breach privacy. It remains to be shown that the queries received by a set of at most $t$ colluding nodes reveal no information about the file index, which directly follows from the privacy of the protocol in \cite{tajeddine2018private}. For completeness, we include a short proof here for both user and server privacy. Consider a set $\cT \subset [n]$ with $|\cT|\leq t$ of colluding servers. The set of queries these servers receive is given by $\{ Q_s^{(p)}  \ | \ s\in \mathcal{T}, p\in [k]\}$. By the MDS property of the dual code $\Cs^\perp$ any subset of $t$ columns of $\GCp$ is linearly independent. As the vectors $Z_1^{(p)},...,Z_{n-k}^{(p)}$ are uniformly distributed and chosen independently for each $p \in [k]$, any subset of $t$ columns of
\begin{align*}
    \bbmatrix{Z^{(p)}_1 &\cdots & Z^{(p)}_{n-k}} \cdot \GCp
\end{align*}
is statistically independent and uniformly distributed. The sum of a uniformly distributed vector and an independently chosen vector is again uniformly distributed, and therefore adding the matrix $\bm{\xi}_{K,p}$ does not incur any dependence between any subset of $t$ columns and the file index $K$. Thus, user privacy is achieved. For each $l \in \qpar{L}$, server secrecy is achieved because in every round $b$ the received state of the user is independent of the fragments $y_{b,p}^{i,(l)}$ with $i \neq K$ and the measurement outcomes $G_s^{(l,b,p)}$ are mutually independent and independent of any file for all $s \in \gpar{2,\ldots,n-1}$.
\end{proof}

Unlike in the classical setting, the servers in the quantum setting do not need access to some shared randomness that is hidden from the user to achieve server secrecy. However, this should not be viewed as an inherent advantage since the servers instead share entanglement.
\begin{theorem} \label{thm:rate}
The QPIR rate of the scheme in Section~\ref{sec:scheme} is
\begin{align*}
    R_{\mathsf{QPIR}} &= \left\{ \begin{array}{ll} \frac{2}{k+t}, & \text{ if } k+t \text{ is even,}\\ \frac{2}{k+t+1}, & \text{ if } k+t \text{ is odd.} \  \end{array} \right. \ 
\end{align*}
\end{theorem}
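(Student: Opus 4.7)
The plan is a direct accounting of the two quantities that appear in Definition~\ref{def:QPIRrate}, namely the number of information bits per file and the total number of qubits $\q{out}$ transmitted from the servers to the user throughout the protocol of Section~\ref{sec:scheme}. Since these counts have already been set up in the preparation step and in Step~4 of the scheme, the argument is essentially a bookkeeping exercise, with the only subtlety being the parity of $n$.

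First I would compute the numerator. By construction, each file $x^K$ is a $\beta \times k$ array over $\F_{4^L}$, hence consists of $\beta k \log_2(4^L) = 2kL\beta$ classical bits. Correctness of the protocol (established in the preceding lemma) guarantees that the user recovers $x^K$ in full, so the number of retrieved information bits equals $2kL\beta$.

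Next I would compute $\q{out}$ by inspecting Step~4. For each triple $(l,b,p)$ with $l\in[L]$, $b\in[\beta]$, $p\in[k]$, every server $s\in[n]$ sends a single qubit $\H_s^{(l,b,p)}$ to the user; if $n$ is odd, server $n-1$ additionally transmits the qubit $\H_{n-1}^{\As,(l,b,p)}$ used in the two-sum protocol with the vacuum input. Multiplying by the $kL\beta$ choices of $(l,b,p)$, this yields
\[
\q{out} \;=\;
\begin{cases}
k L \beta \, n, & n \text{ even},\\
k L \beta \, (n+1), & n \text{ odd}.
\end{cases}
\]
Substituting $n = k+t$ into Definition~\ref{def:QPIRrate} gives
\[
R_{\mathsf{QPIR}} \;=\; \frac{2kL\beta}{\q{out}} \;=\;
\begin{cases}
\dfrac{2}{k+t}, & k+t \text{ even},\\[4pt]
\dfrac{2}{k+t+1}, & k+t \text{ odd},
\end{cases}
\]
which is precisely the claimed rate.

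The only place where a genuine choice has to be made is the handling of odd $n$: one must be careful to note that in this case server $n-1$ is forced to transmit a second qubit so that the Bell measurement at the user's side (Step 5a) can extract $G_{n-1}^{(l,b,p)}$, and this extra qubit is unavoidable in the present description (cf.\ Remark~\ref{rem:Odd case}). I expect this parity split to be the only ``obstacle,'' in the sense that it is the single place where one could miscount; the rest of the proof is a one-line substitution.
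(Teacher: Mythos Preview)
Your proposal is correct and follows essentially the same counting argument as the paper: compute the file size $2kL\beta$ bits, count the transmitted qubits as $kL\beta\,n$ (resp.\ $kL\beta(n+1)$) for $n$ even (resp.\ odd) from Step~4, and substitute $n=k+t$. The paper additionally remarks that the upload cost is negligible as $F\to\infty$, but this is already implicit in Definition~\ref{def:QPIRrate}, so your omission is harmless.
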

\begin{proof}
As increasing $n$ will not affect the rate, let us assume $n = k+t$. See Remark \ref{rem:quantumLimitation} for more detailed discussion.  The upload cost is $U_\mathsf{QPIR} = k n \abs{Q} = k m n$ and $\lim_{F \to \infty} \frac{U_\mathsf{QPIR}}{F} = \lim_{\b \to \infty} \frac{k m n}{2k L \b} = 0$. Hence, we consider the upload cost to be negligible compared to the download cost for a large file size $F$. The download cost is $L n$ or $L (n+1)$ qubits per round and piece, \ie $\q{out} = k L n \b$ or $\q{out} = k L (n+1) \b$ qubits, if $n$ is even or odd, respectively. The information retrieved is a symbol of $\F_{4^L}$ per round and piece, \ie $I_\mathsf{QPIR} = k \b \log_2(4^L) = 2 k L \b$ bits. Thus, the rate is $R_\mathsf{QPIR} = \frac{2 k L \b}{k L n \b} = \frac{2}{n}$ if $n$ is even and $R_\mathsf{QPIR} = \frac{2 k L \b}{k L (n+1) \b} = \frac{2}{n+1}$ if $n$ is odd.
\end{proof}

\section{[4,2]-coded QPIR example}\label{sec:4-2RS}
Let us consider $n=4$ servers. Then $L=1$, and the base field is $\F_4 = \gpar{0,1,\a,\a^2}$ where $\a$ is a primitive element that satisfies
\begin{equation}
\a^2 + \a + 1 = 0.
\label{eq:F4 rule}
\end{equation}

\noindent  Suppose also $\b=1$, hence the set of files $x^i = \tpar{x_1^i,x_2^i}$ is given by $\mathcal{X} = \bigl\{x^i \in \tpar{\F_4}^2 : i \in \qpar{m} \bigr\}$. The files are encoded with a $\qpar{4,2}$-Reed--Solomon storage code with generator matrix
\[
\GC = \bbmatrix{1 & 0 & \a^2 & \a  \\
				 0 & 1 & \a   & \a^2}.
\]

\noindent Hence, the codewords are
\[
y^i = x^i \GC = \bbmatrix{x^i_1 & x^i_2 & \a^2 x^i_1 + \a x^i_2 & \a x^i_1 + \a^2 x^i_2}.
\]

\noindent Server $s \in \qpar{4}$ stores the $s$-th symbol of $y^i$, namely $y^i_s$.

\noindent \textbf{Preparation Step}. For each $p \in \qpar{2}$ the servers prepare the following qubits and states. Server $s \in \gpar{2,3}$ possesses 3 qubits $\H_s^{\Ls,(p)}$, $\H_s^{\Rs,(p)}$ and $\H_s^{(p)}$. The first and the last server possess qubits $\H_1^{(p)}$ and $\H_3^{(p)}$, respectively. Each pair $\tpar{\H_1^{(p)},\H_2^{\Ls,(p)}}$, $\tpar{\H_2^{\Rs,(p)},\H_3^{\Ls,(p)}}$, $\tpar{\H_3^{\Rs,(p)},\H_4^{(p)}}$ and $\tpar{\H_2^{(p)},\H_3^{(p)}}$ shares a maximally entangled state $\cPhi^{(p)}$. Thus, in this step servers prepare a total of $\q{in} = 2 \cdot (3 \cdot 2 + 2) = 16$ qubits and $\q{ent} = 2 \cdot 4 = 8$ pairs which share a maximally entangled state.

The protocol for querying the $K$-th file $x^K$ is depicted in \autoref{fig:QPIR example with RS-code} and is described as follows:
\begin{arabiclist}
	\item The user wants to retrieve the symbol $y^K_p$ stored in server $p \in \qpar{2}$. He generates two independent and uniformly random vectors $Z^{(p)}_1,Z^{(p)}_2 \in \tpar{\F_4}^m$, and encodes them as codewords of the dual code of $\Cs$. In other words, the user builds the queries $Q^{(p)}_1,\ldots,Q^{(p)}_4$ multiplying the random vectors by the generator matrix\footnote{The code chosen in this example is self-dual.} $\GCp = \GC$, \ie during round $p$ the queries are
	\[ \begin{split}
	& \bbmatrix{\! Q_1^{(p)} \! \! & Q_2^{(p)} \! \! & Q_3^{(p)} \! \! & Q_4^{(p)}} = \bbmatrix{Z_1^{(p)} \! \! & Z_2^{(p)} \!} \GCp \! + \bm{\xi}_{K,p} \\
	& = \bbmatrix{\! Z_1^{(p)} \! \! & Z_2^{(p)} \! \! & \a^2 Z_1^{(p)} \! + \! \a Z_2^{(p)} \! \! & \a Z_1^{(p)} \! + \! \a^2 Z_2^{(p)} \!} \! \! + \bm{\xi}_{K,p},
	\end{split} \]
	where $\bm{\xi}_{K,p}$ is a matrix whose $\tpar{K,p}$-th entry is 1 and all the other entries are 0.
	\item The user sends query $Q_s^{(p)}$ to each server $s\in \qpar{4}$.
	\item Server $s \in \qpar{4}$ computes $H_s^{(p)} = \scalar{Q^{(p)}_s}{y_s} \in \F_4$. For bijection \eqref{eq:Field isomorphism} each $H_s^{(p)}$ can be written as an element of $\F_2^2$. Then, the first and the last server apply $\W{H_1^{(p)}}$ and $\W{H_4^{(p)}}$ to the qubits $\H_1^{(p)}$ and $\H_4^{(p)}$, respectively. Server $s \in \gpar{2,3}$ applies $\W{H_s^{(p)}}$ to the qubit $\H_s^{\Ls,(p)}$ and performs a Bell measurement on $\H_s^{\Ls,(p)} \otimes \H_s^{\Rs,(p)}$ whose outcome is denoted by $G_s^{(p)} \in \F_2^2$. Finally, server $s$ applies $\W{G_s^{(p)}}$ to the qubit $\H_s^{(p)}$. This last operation initializes the two-sum transmission protocol (cf. Sec.~\ref{sec:QC}) between server 2, server 3 and the user: the input qubits $\H_2^{(p)} \otimes \H_3^{(p)}$ are used to send $G_2^{(p)} + G_3^{(p)}$ to the user.
	\item Each server sends its qubit $\H_s^{(p)}$ to the user.
	\item The user performs a Bell measurement to the pair $\H_2^{(p)} \otimes \H_3^{(p)}$ to retrieve $G^{(p)} = G_2^{(p)} + G_3^{(p)}$ via the two-sum transmission protocol. He applies $\W{G^{(p)}}$ to the qubit $\H_4^{(p)}$ and performs a Bell measurement on $\H_1^{(p)} \otimes \H_4^{(p)}$, whose outcome is $y^K_p$ with probability 1.
	\item Repeat all the previous steps for every piece $p \in \qpar{2}$.
	\item Since servers 1 and 2 contain the symbols $x^i_1$ and $x^i_2$ respectively, the user can directly build the file $x^K = \tpar{x^K_1,x^K_2}$.
\end{arabiclist}
\begin{figure}[h]
\centering
\includegraphics[width = 0.5\textwidth]{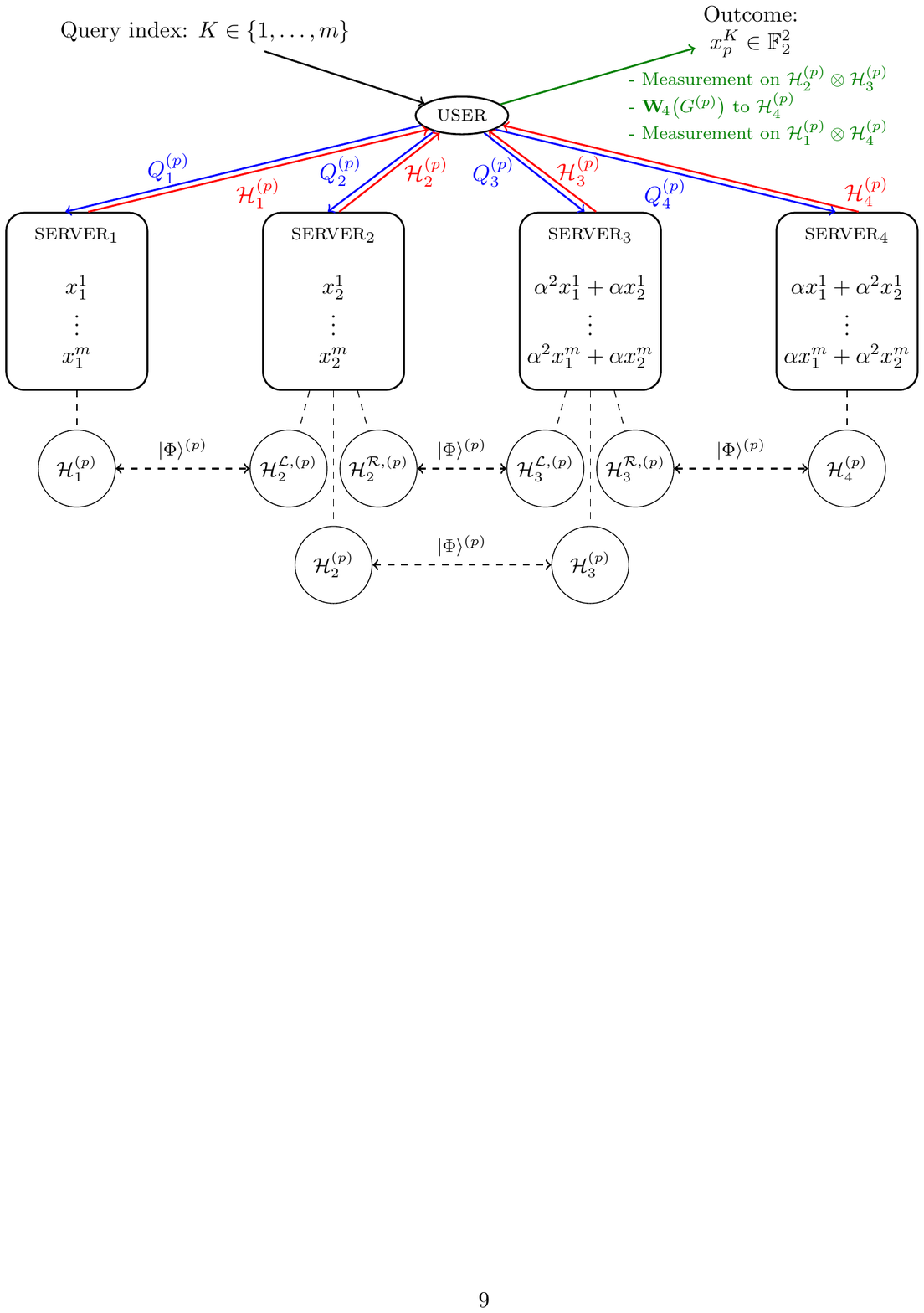}
\caption{QPIR protocol with an $\qpar{4,2}$-RS code, as described in Section~\ref{sec:4-2RS}. Step 2 is depicted in blue, Step 4 in red and Step 5 in green.} 
\label{fig:QPIR example with RS-code}
\end{figure}
\begin{figure}[h]
\centering
\includegraphics[width = .4\textwidth]{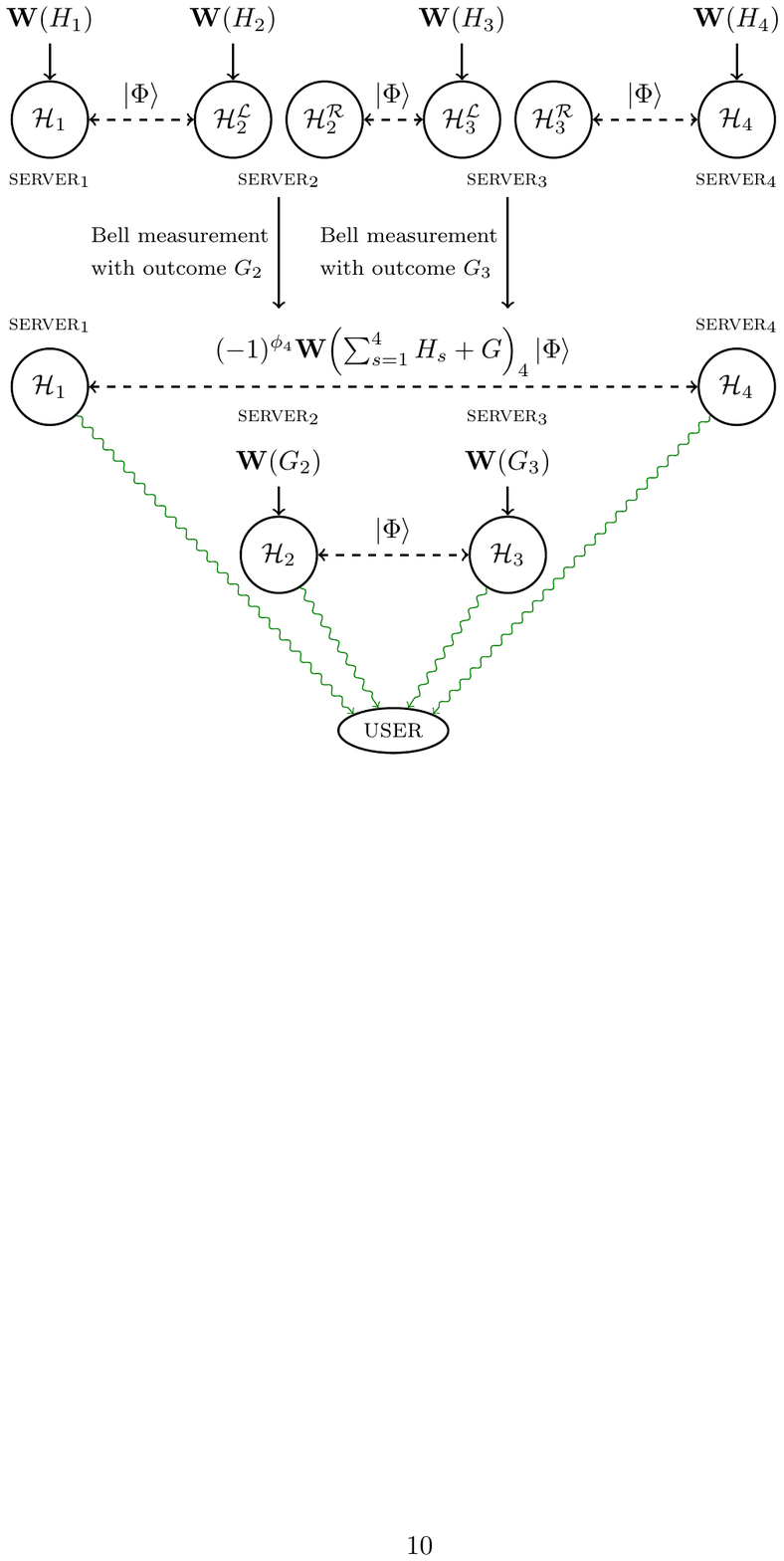}
\caption{Servers' operations and download step (depicted in green) for the QPIR protocol with a $\qpar{4,2}$-RS code, as described in Section~\ref{sec:4-2RS}. These operations are performed for every $p \in \qpar{2}$. Here, we dropped the superscript $(p)$ for a clearer visualization.} 
\label{fig:QPIR download with RS-code}
\end{figure}

The operations performed by the servers and the download step (Step 4) are visualized in \autoref{fig:QPIR download with RS-code}. After these steps, the user possesses the entangled pairs of qubits $\H_2^{(p)} \otimes \H_3^{(p)}$ and $\H_1^{(p)} \otimes \H_4^{(p)}$ for each $p \in \qpar{2}$. The states of those two pairs are, respectively,
\begin{gather*}
\Wi{2}{G_2^{(p)}} \Wi{3}{G_3^{(p)}} \cPhi^{(p)} \smalloverbrace{=}^{\hspace{-4pt} \eqref{eq:Weyl system moving}, \hspace{1pt} \eqref{eq:Weyl sum} \hspace{-4pt}} (-1)^{\phi_G^{(p)}} \Wi{3}{G^{(p)}} \cPhi^{(p)}, \\
(-1)^{\phi_4^{(p)}} \Wi{4}{\sum_{s=1}^4 H_s^{(p)} + G^{(p)}} \cPhi^{(p)},
\end{gather*}

\noindent where $\sum_{s=1}^4 H_s^{(p)} = x^K_p$, $\phi_G^{(p)} \in \F_2$ is determined upon $G_2^{(p)}$ and $G_3^{(p)}$, and $\phi_4^{(p)} \in \F_2$ is determined upon $H_1^{(p)}, H_2^{(p)}, H_3^{(p)}, H_4^{(p)}$ and $G^{(p)}$. In fact, assuming $p=1$,
\[ \begin{split}
\sum_{s=1}^4 H_s^{(1)} & = \sum_{i=1}^m \tpar{Q^i_1 + \delta_{iK}} y^i_1 + \sum_{i=1}^m Q^i_2 y^i_2 + \sum_{i=1}^m Q^i_3 y^i_3 \\
& + \sum_{i=1}^m Q^i_4 y^i_4  = \sum_{i=1}^m Z^i_1 x^i_1 + x^K_1 + \sum_{i=1}^m Z^i_2 x^i_2 \\
& + \sum_{i=1}^m \tpar{\a^2 Z^i_1 + \a Z^i_2} \tpar{\a^2 x^i_1 + \a x^i_2} \\
& + \sum_{i=1}^m \tpar{\a Z^i_1 + \a^2 Z^i_2} \tpar{\a x^i_1 + \a^2 x^i_2} \\
& \smalloverbrace{=}^{\eqref{eq:F4 rule}} x_1^K + \sum_{i=1}^m \textcolor{blue}{Z^i_1 x^i_1} + \sum_{i=1}^m \textcolor{cyan}{Z^i_2 x^i_2} + \sum_{i=1}^m \textcolor{blue}{\a Z^i_1 x^i_1} \\
& + \! \sum_{i=1}^m \textcolor{red}{Z^i_1 x^i_2} \! + \! \sum_{i=1}^m \textcolor{magenta}{Z^i_2 x^i_1} \! + \! \sum_{i=1}^m \textcolor{cyan}{\a^2 Z^i_2 x^i_2} \! + \! \sum_{i=1}^m \textcolor{blue}{\a^2 Z^i_1 x^i_1} \\
& + \sum_{i=1}^m \textcolor{red}{Z^i_1 x^i_2} + \sum_{i=1}^m \textcolor{magenta}{Z^i_2 x^i_1} + \sum_{i=1}^m \textcolor{cyan}{\a Z^i_2 x^i_2} \smalloverbrace{=}^{\eqref{eq:F4 rule}} x^K_1.
\end{split} \]

\noindent The proof for $p=2$ is the same. Performing a Bell measurement on the first pair, the user retrieves $G^{(p)}$ and then applies $\W{G^{(p)}}$ to the qubit $\H_4^{(p)}$. Doing so, the state of the second pair becomes
\[ \begin{split}
    & (-1)^{\phi_4^{(p)}} \Wi{4}{G^{(p)}} \Wi{4}{x^K_p + G^{(p)}} \cPhi^{(p)} \\
    & \stackrel{\eqref{eq:Weyl sum}}{=} (-1)^{\phi_u^{(p)}} \Wi{4}{x^K_p} \cPhi^{(p)},
\end{split} \]

\noindent where $\phi_u^{(p)} \in \F_2$ is determined upon $\phi_4^{(p)}, G^{(p)}$ and $x_p^K + G^{(p)}$. Performing now the Bell measurement on the second pair, the user retrieves $x^K_p$ with probability 1. After retrieving $x^K_1$ and $x^K_2$, the user depacketizes the requested file $x^K$.

User and server secrecy follow by Lemma~\ref{lem:Secrecy}. We now notice that a total of 4 qubits are transmitted to the user in each instance of Step 4. Hence, the rate is $R_\mathsf{QPIR} = \frac{2 \cdot 2}{2 \cdot 4} = \frac{1}{2}$, since we recover four bits and download $\q{out} = 8$ qubits over two rounds.

For the related quantum circuits, see the Figure in Appendix.

\section{QPIR from databases encoded with locally repairable codes} \label{sec:LRC}

In this section we consider the application of the introduced protocol to a DSS encoded with an LRC  \cite{huang2013pyramid,gopalan2012locality, kamath2014codes}, a code that allows for a small number code servers to be recovered locally, \emph{i.e.}, by only downloading data from a small subset of other servers that are in the same \emph{repair set}\footnote{Note that in order to retrieve the \emph{message} (part of a file) from a (non-trivial) LRC codeword, it is not sufficient to access a single repair set, \emph{i.e.}, any information set of an LRC consists of positions from multiple repair sets.}. For the classical setting this has been treated in \cite{martinez2019private, kumar2017private,freij2017private}. Locality comes at the cost of a smaller minimum distance of the code, thereby establishing a trade-off between the locality and maximum number of server failures that is guaranteed to not induce data loss. The restriction of the rate for the scheme presented in Section~\ref{sec:scheme} arises from only being able to download a single symbol of the codeword corresponding to the desired file in each round (cf. Rem.~\ref{rem:quantumLimitation}). Interestingly, for a scheme with this restriction, regardless of whether it is a quantum or a classical system, it can be advantageous in terms of the retrieval rate to operate on a storage system that is encoded with a locally repairable code (LRC). Without such a restriction on the number of downloaded symbols, LRCs are not known to be able to improve the retrieval rates. Here, we only consider LRCs with disjoint repair sets, as formally defined in Definition~\ref{def:LRC}. For a set of integers $\A \subset [n]$ denote the restriction of the $[n,k]$ code $\mathcal{C}$ to the coordinates indexed by $\A$ by $\mathcal{C}|_{\A}$.
\begin{definition}\label{def:LRC}
An $[n,k]$ code $\Cs$ is said to have $(r,\rho)$-locality if there exists a partition $\mathcal{P} = \{\A_1,...,\A_{\mu}\}$ of $[n]$ into sets $\A_l$ with $\left|\A_l\right| \leq r+\rho-1, \ \forall l \in [\mu]$ such that for the distance of the code restricted to the positions indexed by $\A_l$ it holds that $d(\Cs_{\A_l}) \geq \rho , \ \forall \ l\in [\mu]$.
\end{definition}
The sets $\A_1,...,\A_\mu$ are referred to as the repair sets of the code and the restriction $\Cs|_{\A_l}$ of the code $\Cs$ to the positions of $\A_l$ as the $l$-th local code. 
We say a locally repairable code  is \emph{optimal} if it fulfills the Singleton-like bound \cite{gopalan2012locality,kamath2014codes}
\begin{equation*}
  d(\Cs)\leq n-k+1-\left(\left\lceil \frac{k}{r}\right\rceil -1\right)(\rho-1)
\end{equation*}
with equality. A multitude of such optimal code families exist, e.g., see \cite{huang2013pyramid,kamath2014codes,tamo2014family}. For simplicity, here we assume that $r|k$, so we can omit the ceiling operation. For an optimal LRC the local codes are $[r+\rho-1,r]$-MDS codes \cite{huang2013pyramid,kamath2014codes}. It can readily be seen that picking from $\frac{k}{r}$ local codes arbitrary $r$ positions each results in an information set, by restricting the LRC to a subset $\mathcal{S} \subset [n]$ of positions that is the union of $\frac{k}{r}$ repair sets. Then the distance is
\[ \begin{split}
    d(\Cs|_\mathcal{S}) & \geq n-k+1-\left(\frac{k}{r} -1\right)(\rho-1) \\
    & - \underbrace{\left( \frac{n}{r+\rho-1} - \frac{k}{r} \right) (r+\rho-1)}_{\text{maximal decrease in distance due to puncturing}} = \rho
\end{split} \]
and since each of the remaining repair sets is an $[r+\rho-1,r]$ MDS code, it holds that $d(\Cs|_{\mathcal{S}})=\rho$. Clearly, puncturing each of these local codes in any $\rho-1$ positions does not decrease the dimension of the local code, and thereby also not the overall dimension. It follows that the remaining $k$ positions form an information set of the LRC.

In the following we denote by $\mathsf{PIR}(n,k,t,i,j)$ a PIR scheme that retrieves $1$ symbol with index $j$ of the codeword corresponding to the $i$-th file from an $[n,k]$ MDS coded storage system. The scheme for DSSs encoded with an LRC presented here relies on the repeated application of such a scheme to specific subsets of positions/servers. Note that the scheme introduced in Section~\ref{sec:scheme} can be viewed as a $\mathsf{PIR}(n,k,n-k,i,j)$ scheme applied in multiple rounds, once for each $j\in [k]$.

Now let the storage code $\mathcal{C}$ be an optimal $[n,k,r,\rho]$ LRC as in Definition~\ref{def:LRC} and $\mathcal{I} \subset \bigcup_{l=1}^{k/r} A_l$ be such that $|\mathcal{I} \cap A_l| = r , \ \forall l\in[\frac{k}{r}]$ (w.l.o.g. we choose the first $\frac{k}{r}$ repair sets $\A_1,...,\A_{\frac{k}{r}}$). As shown above, performing the PIR scheme $\mathsf{PIR}(r+\rho-1, r, \rho-1, i, j), \ \forall j \in \mathcal{I}$ on the MDS code $\Cs_{\A_l}$ given by the repair set with $j \in \A_l$, results in an information set of the LRC and therefore allows for the $i$-th file to be recovered.

\begin{corollary}\label{C-cor}
 Let $\mathsf{PIR}(r+\rho-1, r, \rho-1, i, j)$ be the QPIR scheme of Section~\ref{sec:scheme}. Then, the retrieval rate of the above QPIR scheme from a DSS encoded with an $[n,k,r,\rho]$ LRC is
\begin{equation*}
    R_{\mathsf{QPIR}} =  \left\{ \begin{array}{ll} \frac{2}{r+t}, & \text{ if } r+t \text{ is even,}\\ \frac{2}{r+t+1}, & \text{ if } r+t \text{ is odd,} \  \end{array} \right. \ .
\end{equation*}
for $0\leq t<\rho$.
\end{corollary}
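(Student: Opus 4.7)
The plan is to reduce Corollary~\ref{C-cor} to Theorem~\ref{thm:rate} by running the MDS-coded QPIR scheme of Section~\ref{sec:scheme} in parallel, once on a carefully chosen $(r+t)$-subset of each of the first $k/r$ repair sets. I would first establish correctness by exhibiting the information set of $\Cs$ that is collectively recovered, then read off the rate from Theorem~\ref{thm:rate}, and finally lift the per-sub-scheme privacy guarantee of Lemma~\ref{lem:Secrecy} to the global scheme. The main ingredients are already packaged in the excerpt: the fact that punctured MDS codes remain MDS, and the observation (proved immediately before the statement) that picking any $r$ positions from each of $k/r$ disjoint repair sets of an optimal $[n,k,r,\rho]$ LRC yields an information set of $\Cs$.

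Concretely, fix the repair sets $\A_1,\ldots,\A_{k/r}$, and inside each $\A_l$ select an arbitrary subset $\mathcal{S}_l\subset\A_l$ of cardinality $r+t$. This is possible because $t<\rho$ implies $r+t\leq r+\rho-1=|\A_l|$. Since each local code $\Cs|_{\A_l}$ is $[r+\rho-1,r]$-MDS, puncturing it to $\mathcal{S}_l$ produces the $[r+t,r]$-MDS code $\Cs|_{\mathcal{S}_l}$. The scheme of Section~\ref{sec:scheme} can therefore be instantiated with parameters $(n',k',t')=(r+t,r,t)$ on the servers indexed by $\mathcal{S}_l$; by Theorem~\ref{thm:rate} and the correctness analysis in Section~\ref{sec:Properties of coded QPIR}, it recovers the $r$ codeword symbols $\{y^K_{\cdot,j}:j\in\mathcal{I}_l\}$ for some information set $\mathcal{I}_l\subset\mathcal{S}_l$ of $\Cs|_{\mathcal{S}_l}$ (note that the scheme merely reads off raw codeword symbols; the final generator-matrix inversion is postponed to the global step). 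Concatenating across $l$ yields $k$ LRC codeword symbols at positions $\bigcup_l\mathcal{I}_l$, which by the pre-statement argument constitute an information set of $\Cs$, so the user reconstructs $x^K$ by a single $k\times k$ inversion using the appropriate submatrix of the LRC generator matrix.

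The rate follows immediately from Theorem~\ref{thm:rate}. Each sub-scheme retrieves $2rL\beta$ information bits by downloading $rL\beta(r+t)$ qubits if $r+t$ is even and $rL\beta(r+t+1)$ qubits if $r+t$ is odd. Summing both numerator and denominator over the $k/r$ independent sub-schemes preserves the ratio, giving $R_{\mathsf{QPIR}}=2/(r+t)$ or $R_{\mathsf{QPIR}}=2/(r+t+1)$ as claimed; the upload cost, quantum preparation cost, and any reconstruction overhead are negligible as $\beta\to\infty$.

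Privacy requires a small lifting argument, which I expect to be the most delicate step. Let $\cT\subset[n]$ with $|\cT|\leq t$ and set $\cT_l=\cT\cap\mathcal{S}_l$, so $|\cT_l|\leq t$ for every $l$. Since the $l$-th sub-scheme draws its randomizing vectors $Z^{(p)}_1,\ldots,Z^{(p)}_{t}$ independently of those used in the other sub-schemes, Lemma~\ref{lem:Secrecy} implies that the queries seen by $\cT_l$ in the $l$-th sub-scheme are statistically independent of $K$, and these distributions remain mutually independent across $l$; hence the joint distribution of queries seen by $\cT$ is independent of $K$, yielding user privacy. Server privacy is inherited in the same way, since each sub-scheme discloses only the single requested codeword symbol per piece to the user, and the disjointness of the repair sets together with the independence of the entanglement resources ensures the $k/r$ sub-schemes cannot interfere.
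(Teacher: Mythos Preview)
Your proposal is correct and follows essentially the same approach as the paper: apply the MDS-coded QPIR scheme of Section~\ref{sec:scheme} locally on each of the first $k/r$ repair sets and invoke Theorem~\ref{thm:rate} to obtain the local (and hence overall) rate. The paper's proof is much terser---it only computes the rate and implicitly relies on the surrounding discussion for correctness---whereas you additionally spell out correctness via the information-set argument, handle general $t<\rho$ by puncturing each repair set to $r+t$ positions (the paper's proof literally sets $t=\rho-1$), and supply the privacy lifting; these are welcome elaborations but not a different method.
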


\begin{proof}
By Theorem~\ref{thm:rate} the rate of the scheme from Section~\ref{sec:scheme}, which is applied repeatedly to retrieve all symbols indexed by $\mathcal{I}$, is given by
\begin{equation*}
  R_{\mathsf{QPIR}}^{\mathsf{local}} = \left\{ \begin{array}{ll} \frac{2}{r+t}, & \text{ if } r+t \text{ is even,}\\ \frac{2}{r+t+1}, & \text{ if } r+t \text{ is odd,} \  \end{array} \right. \ 
\end{equation*}
where $t= \rho-1$. As the scheme is applied to the local repair sets a total of $k$ times to obtain $k$ symbols of the desired file, the overall rate is equal to the local rate $R_{\mathsf{QPIR}}^{\mathsf{local}}$.
\end{proof}

Note that for LRC coded storage the QPIR rate does no longer directly depend on the length $n$ of the code. Since the locality parameter $r$ is usually considerably smaller than the code dimension $k$, this results in an increase in the retrieval rate compared to the rate for MDS coded storage of Theorem~\ref{thm:rate}. On the other hand, as $\rho$ is typically small, this comes at the cost of a lowered resistance against collusion and total number of server failures that can be tolerated. However, the scheme is still able to resist more than $t=\rho-1$ colluding servers, provided that no more than $t$ servers collude per local group. Explicitly, for such collusion patterns, the scheme can resist collusion of up to $t\mu=(\rho-1)\mu$ servers.

\section{Conclusions and future work}

In the classical private information retrieval (PIR) setup, a user wants to retrieve a file from a database or a distributed storage system (DSS)  without revealing the file identity to the servers holding the data. In the quantum PIR (QPIR) setting, a user privately retrieves a classical file by receiving quantum information from the servers. The QPIR problem has been treated by Song \emph{et al.} in the case of replicated servers, both without collusion and with all but one servers colluding. In this paper, the QPIR setting was extended to  account for maximum distance separable (MDS) coded servers. The proposed protocol works for any $[n,k]$-MDS code and $t$-collusion with $t=n-k$. Similarly to the previous cases, the rates achieved are better than those known or conjectured in the classical counterparts. It was also demonstrated how the retrieval rates can be significantly improved by using locally repairable codes (LRCs) consisting of disjoint repair groups, each of which is an MDS code. Our next goal is to find the capacity of QPIR from coded and colluding servers. We may also consider different approaches (\emph{e.g.}, divide the servers into pairs and use the two-sum transmission protocol from each pair to the user), malicious servers and arbitrary coded servers in future work.

\section*{Acknowledgments}
The authors would like to thank Prof.~M. Hayashi and S.~Song for helpful discussions.

\bibliographystyle{IEEEtran}
\bibliography{main}

\newpage \ \newpage

\twocolumn[
  \begin{@twocolumnfalse}
   \section*{Appendix: Quantum circuit for the QPIR example of Sec. \ref{sec:4-2RS}. }
   \vspace{1cm}

\centering
\begin{tikzpicture}[scale=1.50000,x=1pt,y=1pt]
\filldraw[color=white] (17.500000, 3.000000) rectangle (-268.500000, -408.000000);
\usetikzlibrary{decorations.pathreplacing,decorations.pathmorphing}
\draw[color=black] (-251.000000,10.000000) node[above] {$\textsc{server}_1$};
\draw[color=black] (-179.000000,10.000000) node[above] {$\textsc{server}_2$};
\draw[color=black] (-72.000000,10.000000) node[above] {$\textsc{server}_3$};
\draw[color=black] (-0.000000,10.000000) node[above] {$\textsc{server}_4$};
\draw[color=black,thick] (-251.000000,0.000000) -- (-251.000000,-369.000000);
\draw[color=black,thick] (-250.500000,-369.000000) -- (-250.500000,-394.500000);
\draw[color=black,thick] (-251.500000,-369.000000) -- (-251.500000,-394.500000);
\draw[color=black] (-251.000000,0.000000) node[above] {$\mathcal{H}_1$};
\draw[color=black,dashed] (-232.500000,18.000000) -- (-232.500000,-211.500000);
\draw[color=black,thick] (-214.000000,0.000000) -- (-214.000000,-106.500000);
\draw[color=black,thick] (-213.500000,-106.500000) -- (-213.500000,-186.000000);
\draw[color=black,thick] (-214.500000,-106.500000) -- (-214.500000,-186.000000);
\draw[color=black] (-214.000000,0.000000) node[above] {$\mathcal{H}_2^\mathcal{L}$};
\draw[color=black,thick] (-179.000000,-133.500000) -- (-179.000000,-279.000000);
\draw[color=black,thick] (-178.500000,-279.000000) -- (-178.500000,-303.000000);
\draw[color=black,thick] (-179.500000,-279.000000) -- (-179.500000,-303.000000);
\draw[color=black] (-179.000000,0.000000) node[above] {${}$};
\draw[color=black,thick] (-144.000000,0.000000) -- (-144.000000,-106.500000);
\draw[color=black,thick] (-143.500000,-106.500000) -- (-143.500000,-186.000000);
\draw[color=black,thick] (-144.500000,-106.500000) -- (-144.500000,-186.000000);
\draw[color=black] (-144.000000,0.000000) node[above] {$\mathcal{H}_2^\mathcal{R}$};
\draw[color=black,dashed] (-125.500000,18.000000) -- (-125.500000,-211.500000);
\draw[color=black,thick] (-107.000000,0.000000) -- (-107.000000,-106.500000);
\draw[color=black,thick] (-106.500000,-106.500000) -- (-106.500000,-186.000000);
\draw[color=black,thick] (-107.500000,-106.500000) -- (-107.500000,-186.000000);
\draw[color=black] (-107.000000,0.000000) node[above] {$\mathcal{H}_3^\mathcal{L}$};
\draw[color=black,thick] (-72.000000,-133.500000) -- (-72.000000,-279.000000);
\draw[color=black,thick] (-71.500000,-279.000000) -- (-71.500000,-303.000000);
\draw[color=black,thick] (-72.500000,-279.000000) -- (-72.500000,-303.000000);
\draw[color=black] (-72.000000,0.000000) node[above] {${}$};
\draw[color=black,thick] (-37.000000,0.000000) -- (-37.000000,-106.500000);
\draw[color=black,thick] (-36.500000,-106.500000) -- (-36.500000,-186.000000);
\draw[color=black,thick] (-37.500000,-106.500000) -- (-37.500000,-186.000000);
\draw[color=black] (-37.000000,0.000000) node[above] {$\mathcal{H}_3^\mathcal{R}$};
\draw[color=black,dashed] (-18.500000,18.000000) -- (-18.500000,-211.500000);
\draw[color=black,thick] (-0.000000,0.000000) -- (-0.000000,-369.000000);
\draw[color=black,thick] (0.500000,-369.000000) -- (0.500000,-394.500000);
\draw[color=black,thick] (-0.500000,-369.000000) -- (-0.500000,-394.500000);
\draw[color=black] (-0.000000,0.000000) node[above] {$\mathcal{H}_4$};
\draw[fill=white,color=white] (-257.000000, -21.000000) rectangle (-208.000000, -6.000000);
\draw (-232.500000, -13.500000) node {$\cPhi$};
\draw[decorate,decoration={brace,amplitude = 4.000000pt},very thick] (-208.000000,-6.000000) -- (-257.000000,-6.000000);
\draw[fill=white,color=white] (-150.000000, -21.000000) rectangle (-101.000000, -6.000000);
\draw (-125.500000, -13.500000) node {$\cPhi$};
\draw[decorate,decoration={brace,amplitude = 4.000000pt},very thick] (-101.000000,-6.000000) -- (-150.000000,-6.000000);
\draw[fill=white,color=white] (-43.000000, -21.000000) rectangle (6.000000, -6.000000);
\draw (-18.500000, -13.500000) node {$\cPhi$};
\draw[decorate,decoration={brace,amplitude = 4.000000pt},very thick] (6.000000,-6.000000) -- (-43.000000,-6.000000);
\begin{scope}
\draw[fill=white] (-251.000000, -39.000000) +(-45.000000:21.213203pt and 8.485281pt) -- +(45.000000:21.213203pt and 8.485281pt) -- +(135.000000:21.213203pt and 8.485281pt) -- +(225.000000:21.213203pt and 8.485281pt) -- cycle;
\clip (-251.000000, -39.000000) +(-45.000000:21.213203pt and 8.485281pt) -- +(45.000000:21.213203pt and 8.485281pt) -- +(135.000000:21.213203pt and 8.485281pt) -- +(225.000000:21.213203pt and 8.485281pt) -- cycle;
\draw (-251.000000, -39.000000) node {$\mathbf{W}_{1}\!\left( H_1 \right)$};
\end{scope}
\begin{scope}
\draw[fill=white] (-214.000000, -39.000000) +(-45.000000:24.748737pt and 8.485281pt) -- +(45.000000:24.748737pt and 8.485281pt) -- +(135.000000:24.748737pt and 8.485281pt) -- +(225.000000:24.748737pt and 8.485281pt) -- cycle;
\clip (-214.000000, -39.000000) +(-45.000000:24.748737pt and 8.485281pt) -- +(45.000000:24.748737pt and 8.485281pt) -- +(135.000000:24.748737pt and 8.485281pt) -- +(225.000000:24.748737pt and 8.485281pt) -- cycle;
\draw (-214.000000, -39.000000) node {$\mathbf{W}_{2^\mathcal{L}}\!\left( H_2 \right)$};
\end{scope}
\begin{scope}
\draw[fill=white] (-107.000000, -39.000000) +(-45.000000:24.748737pt and 8.485281pt) -- +(45.000000:24.748737pt and 8.485281pt) -- +(135.000000:24.748737pt and 8.485281pt) -- +(225.000000:24.748737pt and 8.485281pt) -- cycle;
\clip (-107.000000, -39.000000) +(-45.000000:24.748737pt and 8.485281pt) -- +(45.000000:24.748737pt and 8.485281pt) -- +(135.000000:24.748737pt and 8.485281pt) -- +(225.000000:24.748737pt and 8.485281pt) -- cycle;
\draw (-107.000000, -39.000000) node {$\mathbf{W}_{3^\mathcal{L}}\!\left( H_3 \right)$};
\end{scope}
\begin{scope}
\draw[fill=white] (0.000000, -39.000000) +(-45.000000:21.213203pt and 8.485281pt) -- +(45.000000:21.213203pt and 8.485281pt) -- +(135.000000:21.213203pt and 8.485281pt) -- +(225.000000:21.213203pt and 8.485281pt) -- cycle;
\clip (0.000000, -39.000000) +(-45.000000:21.213203pt and 8.485281pt) -- +(45.000000:21.213203pt and 8.485281pt) -- +(135.000000:21.213203pt and 8.485281pt) -- +(225.000000:21.213203pt and 8.485281pt) -- cycle;
\draw (0.000000, -39.000000) node {$\mathbf{W}_{4}\!\left( H_4 \right)$};
\end{scope}
\draw (-214.000000,-60.000000) -- (-144.000000,-60.000000);
\filldraw (-214.000000, -60.000000) circle(1.500000pt);
\begin{scope}
\draw[fill=white] (-144.000000, -60.000000) circle(3.000000pt);
\clip (-144.000000, -60.000000) circle(3.000000pt);
\draw (-147.000000, -60.000000) -- (-141.000000, -60.000000);
\draw (-144.000000, -63.000000) -- (-144.000000, -57.000000);
\end{scope}
\draw (-107.000000,-60.000000) -- (-37.000000,-60.000000);
\filldraw (-107.000000, -60.000000) circle(1.500000pt);
\begin{scope}
\draw[fill=white] (-37.000000, -60.000000) circle(3.000000pt);
\clip (-37.000000, -60.000000) circle(3.000000pt);
\draw (-40.000000, -60.000000) -- (-34.000000, -60.000000);
\draw (-37.000000, -63.000000) -- (-37.000000, -57.000000);
\end{scope}
\begin{scope}
\draw[fill=white] (-214.000000, -81.000000) +(-45.000000:8.485281pt and 8.485281pt) -- +(45.000000:8.485281pt and 8.485281pt) -- +(135.000000:8.485281pt and 8.485281pt) -- +(225.000000:8.485281pt and 8.485281pt) -- cycle;
\clip (-214.000000, -81.000000) +(-45.000000:8.485281pt and 8.485281pt) -- +(45.000000:8.485281pt and 8.485281pt) -- +(135.000000:8.485281pt and 8.485281pt) -- +(225.000000:8.485281pt and 8.485281pt) -- cycle;
\draw (-214.000000, -81.000000) node {$H$};
\end{scope}
\begin{scope}
\draw[fill=white] (-107.000000, -81.000000) +(-45.000000:8.485281pt and 8.485281pt) -- +(45.000000:8.485281pt and 8.485281pt) -- +(135.000000:8.485281pt and 8.485281pt) -- +(225.000000:8.485281pt and 8.485281pt) -- cycle;
\clip (-107.000000, -81.000000) +(-45.000000:8.485281pt and 8.485281pt) -- +(45.000000:8.485281pt and 8.485281pt) -- +(135.000000:8.485281pt and 8.485281pt) -- +(225.000000:8.485281pt and 8.485281pt) -- cycle;
\draw (-107.000000, -81.000000) node {$H$};
\end{scope}
\draw[fill=white] (-220.000000, -112.500000) rectangle (-208.000000, -100.500000);
\draw[very thin] (-214.000000, -105.900000) arc (90:150:6.000000pt);
\draw[very thin] (-214.000000, -105.900000) arc (90:30:6.000000pt);
\draw[->,>=stealth] (-214.000000, -111.900000) -- +(80:10.392305pt);
\draw[fill=white] (-150.000000, -112.500000) rectangle (-138.000000, -100.500000);
\draw[very thin] (-144.000000, -105.900000) arc (90:150:6.000000pt);
\draw[very thin] (-144.000000, -105.900000) arc (90:30:6.000000pt);
\draw[->,>=stealth] (-144.000000, -111.900000) -- +(80:10.392305pt);
\draw[color=black] (-179.000000,-99.000000) node[fill=white,below,minimum width=35.000000pt,minimum height=15.000000pt,inner sep=0pt] {\phantom{${}$}};
\draw[color=black] (-179.000000,-99.000000) node[below] {${}$};
\draw[fill=white] (-113.000000, -112.500000) rectangle (-101.000000, -100.500000);
\draw[very thin] (-107.000000, -105.900000) arc (90:150:6.000000pt);
\draw[very thin] (-107.000000, -105.900000) arc (90:30:6.000000pt);
\draw[->,>=stealth] (-107.000000, -111.900000) -- +(80:10.392305pt);
\draw[fill=white] (-43.000000, -112.500000) rectangle (-31.000000, -100.500000);
\draw[very thin] (-37.000000, -105.900000) arc (90:150:6.000000pt);
\draw[very thin] (-37.000000, -105.900000) arc (90:30:6.000000pt);
\draw[->,>=stealth] (-37.000000, -111.900000) -- +(80:10.392305pt);
\draw[color=black] (-72.000000,-99.000000) node[fill=white,below,minimum width=35.000000pt,minimum height=15.000000pt,inner sep=0pt] {\phantom{${}$}};
\draw[color=black] (-72.000000,-99.000000) node[below] {${}$};
\draw[color=black] (-179.000000,-141.000000) node[fill=white,above,minimum width=35.000000pt,minimum height=15.000000pt,inner sep=0pt] {\phantom{$\mathcal{H}_2$}};
\draw[color=black] (-179.000000,-141.000000) node[above] {$\mathcal{H}_2$};
\draw[color=black] (-72.000000,-141.000000) node[fill=white,above,minimum width=35.000000pt,minimum height=15.000000pt,inner sep=0pt] {\phantom{$\mathcal{H}_3$}};
\draw[color=black] (-72.000000,-141.000000) node[above] {$\mathcal{H}_3$};
\draw[fill=white,color=white] (-185.000000, -168.000000) rectangle (-66.000000, -153.000000);
\draw (-125.500000, -160.500000) node {$\cPhi$};
\draw[decorate,decoration={brace,amplitude = 4.000000pt},very thick] (-66.000000,-153.000000) -- (-185.000000,-153.000000);
\draw (-214.000000,-185.500000) -- (-179.000000,-185.500000);
\draw (-214.000000,-186.500000) -- (-179.000000,-186.500000);
\draw (-179.000000,-185.500000) -- (-144.000000,-185.500000);
\draw (-179.000000,-186.500000) -- (-144.000000,-186.500000);
\begin{scope}
\draw[fill=white] (-179.000000, -186.000000) +(-45.000000:21.213203pt and 8.485281pt) -- +(45.000000:21.213203pt and 8.485281pt) -- +(135.000000:21.213203pt and 8.485281pt) -- +(225.000000:21.213203pt and 8.485281pt) -- cycle;
\clip (-179.000000, -186.000000) +(-45.000000:21.213203pt and 8.485281pt) -- +(45.000000:21.213203pt and 8.485281pt) -- +(135.000000:21.213203pt and 8.485281pt) -- +(225.000000:21.213203pt and 8.485281pt) -- cycle;
\draw (-179.000000, -186.000000) node {$\mathbf{W}_{2}\!\left( G_2 \right)$};
\end{scope}
\filldraw (-214.000000, -186.000000) circle(1.500000pt);
\filldraw (-144.000000, -186.000000) circle(1.500000pt);
\draw (-107.000000,-185.500000) -- (-72.000000,-185.500000);
\draw (-107.000000,-186.500000) -- (-72.000000,-186.500000);
\draw (-72.000000,-185.500000) -- (-37.000000,-185.500000);
\draw (-72.000000,-186.500000) -- (-37.000000,-186.500000);
\begin{scope}
\draw[fill=white] (-72.000000, -186.000000) +(-45.000000:21.213203pt and 8.485281pt) -- +(45.000000:21.213203pt and 8.485281pt) -- +(135.000000:21.213203pt and 8.485281pt) -- +(225.000000:21.213203pt and 8.485281pt) -- cycle;
\clip (-72.000000, -186.000000) +(-45.000000:21.213203pt and 8.485281pt) -- +(45.000000:21.213203pt and 8.485281pt) -- +(135.000000:21.213203pt and 8.485281pt) -- +(225.000000:21.213203pt and 8.485281pt) -- cycle;
\draw (-72.000000, -186.000000) node {$\mathbf{W}_{3}\!\left( G_3 \right)$};
\end{scope}
\filldraw (-107.000000, -186.000000) circle(1.500000pt);
\filldraw (-37.000000, -186.000000) circle(1.500000pt);
\draw (-179.000000,-234.000000) -- (-72.000000,-234.000000);
\filldraw (-179.000000, -234.000000) circle(1.500000pt);
\begin{scope}
\draw[fill=white] (-72.000000, -234.000000) circle(3.000000pt);
\clip (-72.000000, -234.000000) circle(3.000000pt);
\draw (-75.000000, -234.000000) -- (-69.000000, -234.000000);
\draw (-72.000000, -237.000000) -- (-72.000000, -231.000000);
\end{scope}
\begin{scope}
\draw[fill=white] (-179.000000, -255.000000) +(-45.000000:8.485281pt and 8.485281pt) -- +(45.000000:8.485281pt and 8.485281pt) -- +(135.000000:8.485281pt and 8.485281pt) -- +(225.000000:8.485281pt and 8.485281pt) -- cycle;
\clip (-179.000000, -255.000000) +(-45.000000:8.485281pt and 8.485281pt) -- +(45.000000:8.485281pt and 8.485281pt) -- +(135.000000:8.485281pt and 8.485281pt) -- +(225.000000:8.485281pt and 8.485281pt) -- cycle;
\draw (-179.000000, -255.000000) node {$H$};
\end{scope}
\draw[fill=white] (-185.000000, -285.000000) rectangle (-173.000000, -273.000000);
\draw[very thin] (-179.000000, -278.400000) arc (90:150:6.000000pt);
\draw[very thin] (-179.000000, -278.400000) arc (90:30:6.000000pt);
\draw[->,>=stealth] (-179.000000, -284.400000) -- +(80:10.392305pt);
\draw[fill=white] (-78.000000, -285.000000) rectangle (-66.000000, -273.000000);
\draw[very thin] (-72.000000, -278.400000) arc (90:150:6.000000pt);
\draw[very thin] (-72.000000, -278.400000) arc (90:30:6.000000pt);
\draw[->,>=stealth] (-72.000000, -284.400000) -- +(80:10.392305pt);
\draw (-179.000000,-302.500000) -- (-0.000000,-302.500000);
\draw (-179.000000,-303.500000) -- (-0.000000,-303.500000);
\begin{scope}
\draw[fill=white] (0.000000, -303.000000) +(-45.000000:21.213203pt and 8.485281pt) -- +(45.000000:21.213203pt and 8.485281pt) -- +(135.000000:21.213203pt and 8.485281pt) -- +(225.000000:21.213203pt and 8.485281pt) -- cycle;
\clip (0.000000, -303.000000) +(-45.000000:21.213203pt and 8.485281pt) -- +(45.000000:21.213203pt and 8.485281pt) -- +(135.000000:21.213203pt and 8.485281pt) -- +(225.000000:21.213203pt and 8.485281pt) -- cycle;
\draw (0.000000, -303.000000) node {$\mathbf{W}_{4}\!\left( G \right)$};
\end{scope}
\filldraw (-179.000000, -303.000000) circle(1.500000pt);
\filldraw (-72.000000, -303.000000) circle(1.500000pt);
\draw (-251.000000,-324.000000) -- (-0.000000,-324.000000);
\filldraw (-251.000000, -324.000000) circle(1.500000pt);
\begin{scope}
\draw[fill=white] (-0.000000, -324.000000) circle(3.000000pt);
\clip (-0.000000, -324.000000) circle(3.000000pt);
\draw (-3.000000, -324.000000) -- (3.000000, -324.000000);
\draw (-0.000000, -327.000000) -- (-0.000000, -321.000000);
\end{scope}
\begin{scope}
\draw[fill=white] (-251.000000, -345.000000) +(-45.000000:8.485281pt and 8.485281pt) -- +(45.000000:8.485281pt and 8.485281pt) -- +(135.000000:8.485281pt and 8.485281pt) -- +(225.000000:8.485281pt and 8.485281pt) -- cycle;
\clip (-251.000000, -345.000000) +(-45.000000:8.485281pt and 8.485281pt) -- +(45.000000:8.485281pt and 8.485281pt) -- +(135.000000:8.485281pt and 8.485281pt) -- +(225.000000:8.485281pt and 8.485281pt) -- cycle;
\draw (-251.000000, -345.000000) node {$H$};
\end{scope}
\draw[fill=white] (-257.000000, -375.000000) rectangle (-245.000000, -363.000000);
\draw[very thin] (-251.000000, -368.400000) arc (90:150:6.000000pt);
\draw[very thin] (-251.000000, -368.400000) arc (90:30:6.000000pt);
\draw[->,>=stealth] (-251.000000, -374.400000) -- +(80:10.392305pt);
\draw[fill=white] (-6.000000, -375.000000) rectangle (6.000000, -363.000000);
\draw[very thin] (-0.000000, -368.400000) arc (90:150:6.000000pt);
\draw[very thin] (-0.000000, -368.400000) arc (90:30:6.000000pt);
\draw[->,>=stealth] (-0.000000, -374.400000) -- +(80:10.392305pt);
\draw[color=black] (-251.000000,-387.000000) node[fill=white,below,minimum width=35.000000pt,minimum height=15.000000pt,inner sep=0pt] {\phantom{$x_{p,1}^K$}};
\draw[color=black] (-251.000000,-387.000000) node[below] {$x_{p,1}^K$};
\draw[color=black] (-0.000000,-387.000000) node[fill=white,below,minimum width=35.000000pt,minimum height=15.000000pt,inner sep=0pt] {\phantom{$x_{p,2}^K$}};
\draw[color=black] (-0.000000,-387.000000) node[below] {$x_{p,2}^K$};
\draw[dashed] (17.500000, -211.500000) -- (-268.500000, -211.500000);
\draw[draw opacity=0.000000,fill opacity=0.400000,fill=blue,rounded corners] (-268.500000,-201.000000) rectangle (17.500000,-222.000000);
\end{tikzpicture}


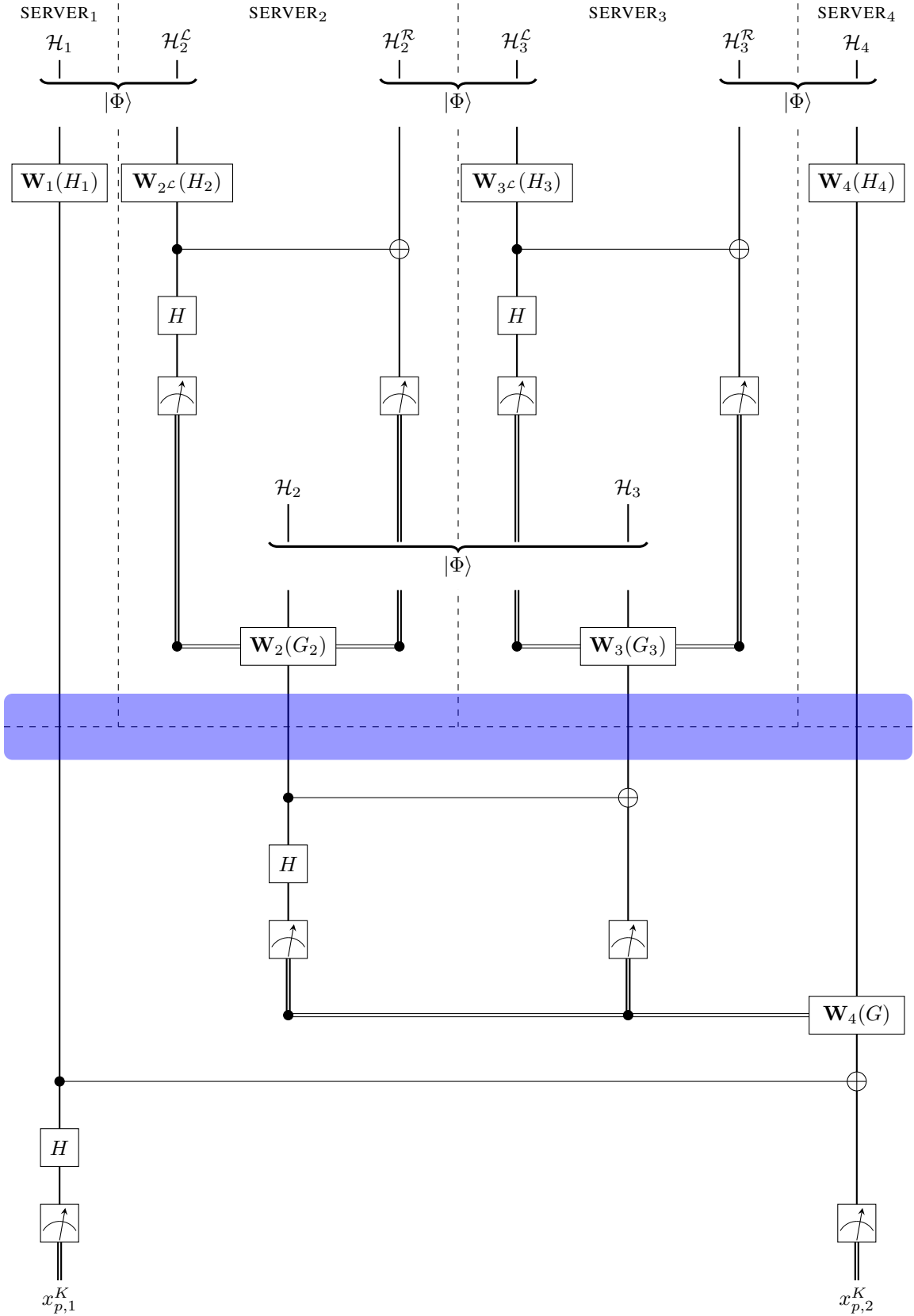
\captionof{figure}{Quantum circuit for the $[4,2]$-Reed--Solomon coded QPIR example. The blue rectangle represents the transmission of the four qubits from the servers to the user through 4 quantum channels.}

  \end{@twocolumnfalse}
]

\end{document}


\begin{IEEEbiographynophoto}{Matteo Allaix}
M. Allaix obtained his MSc degree in 2020 from University of Genoa. In 2019, he visited Aalto University where he wrote his MSc thesis. M. Allaix is currently pursuing his PhD degree at Aalto University.
\end{IEEEbiographynophoto}

\begin{IEEEbiographynophoto}{Lukas Holzbaur}
Lukas Holzbaur received his M.Sc. degree in Electrical and Computer Engineering from the Technical University of Munich, Germany, in 2017. Since 2017, he is a doctoral researcher at the Institute for Communications Engineering of the Technical University of Munich, Germany. His research interests include algebraic coding theory with application to distributed data storage, security, and communications.
\end{IEEEbiographynophoto}

\begin{IEEEbiographynophoto}{Tefjol Pllaha}
Tefjol Pllaha received the B.Sc. and the M.Sc. degrees from University of Tirana, Albania, in 2009 and 2011, respectively, and the Ph.D. degree from University of Kentucky, Lexington, KY, USA, in 2019, all in mathematics. Currently, he is a postdoctoral researcher at the Department of Communications and Networking, Aalto University, Finland. His current research interests are in aspects of wireless communication and quantum computation.
\end{IEEEbiographynophoto}

\begin{IEEEbiographynophoto}{Camilla Hollanti} (M'09)
Camilla Hollanti received the M.Sc. and Ph.D. degrees from the University of Turku, Finland, in 2003 and 2009, respectively, both in pure mathematics. Her research interests lie within applications of algebraic number theory to wireless communications and physical layer security, as well as in combinatorial and coding theoretic methods related to distributed storage systems and private information retrieval.

For 2004-2011 Hollanti was with the University of Turku. She joined the University of Tampere as  Lecturer for the academic year 2009-2010. Since 2011, she has been with the Department of Mathematics and Systems Analysis at Aalto University, Finland, where she currently works as Full Professor and Vice Head, and leads a research group in Algebra, Number Theory, and Applications. During 2017-2020, Hollanti was affiliated with the Institute of Advanced Studies at the Technical University of Munich, where she held a three-year Hans Fischer Fellowship, funded by the German Excellence Initiative and the EU 7th Framework Programme.

Hollanti is currently an editor of the AIMS Journal on Advances in Mathematics of Communications and an associate editor of the IEEE Transactions on Information Theory. For 2021--2023, she will join the editorial board of the SIAM Journal on Applied Algebra and Geometry. She is a recipient of several grants, including five Academy of Finland grants. In 2014, she received the World Cultural Council Special Recognition Award for young researchers. In 2017, the Finnish Academy of Science and Letters awarded her the V\"ais\"al\"a Prize in Mathematics. For 2020-2022, Hollanti will serve as a member of the Board of Governors of the IEEE Information Theory Society, and is one of the General Chairs of IEEE ISIT 2022.
\end{IEEEbiographynophoto}